\newtheorem{thmm}{Theorem}
\newtheorem{assump}{Assumption}
\newtheorem{remark}{Remark}
\newtheorem{lemma}{Lemma}
\def\Busic{Bu\v{s}i\'{c}} 
\def\Prob{{\sf P}}
\def\stateSet{{\sf X}}
\def\stateSetZ{{\sf Z}}
\def\timeHorz{{\sf T}}
\def\delequalRHS{\mathrel{\ensurestackMath{\stackunder[1pt]{=}{\scriptstyle\triangledown}}}}
\def\textUB{\text{max}}
\def\textLB{\text{min}}
\def\textAgg{\text{\footnotesize{agg}}}
\def\textOn{\text{\footnotesize{on}}}
\def\textOff{\text{\footnotesize{off}}}
\def\textTS{\text{\footnotesize{TS}}}
\def\textBA{\text{BA}}
\def\textPolName{\text{GS}}
\def\tempBin{\ensuremath{\lambda}}
\def\numTCLs{\ensuremath{ {\sf{N_{tcl}} } }}
\def\Locked{\ensuremath{{\sf{L} } }}
\newcommand{\version}{arxiv}
\newcommand{\markedManu}{MARKED}
\def\pb#1{\footnote{{\color{red}{PB#1}}}}
\NewDocumentCommand{\setupcollaborator}{mm}
{
	\prop_new:c { g_collaborator_#1_prop }
	\pb_prop_gset_bykeys:cn { g_collaborator_#1_prop } { #2 }
}
\NewDocumentCommand{\selectcollaborator}{m}
{
	\prop_map_inline:cn { g_collaborator_#1_prop }
	{
		\tl_set:cn { ##1 } { ##2 }
	}
}
\edef\endfrontmatter{%
	\unexpanded\expandafter{\endfrontmatter}
	\noexpand\endNoHyper 
}
\begin{document}





\begin{frontmatter}

\title{A unified framework for coordination of thermostatically controlled loads\thanksref{footnoteinfo}} 

\thanks[footnoteinfo]{This paper was not presented at any IFAC 
meeting. Corresponding author A.~Coffman. The research reported here has been partially supported by the NSF through awards 1646229 (CPS-ECCS) and 1934322 (CPS-ECCS), and the French National Research Agency grant ANR-16-CE05-0008.}

\author[AC]{Austin Coffman}\ead{bubbaroney@ufl.edu},    
\author[AB]{Ana \protect\Busic}\ead{ana.busic@inria.fr},               
\author[PB]{Prabir Barooah}\ead{pbarooah@ufl.edu}  

\address[AC]{University of Florida, Gainesville, FL, USA}  
\address[AB]{Inria, Paris, France}             
\address[PB]{University of Florida, Gainesville, FL, USA}        

\begin{keyword}
Distributed control, Grid support, Randomized control, Thermostatically controlled loads.
\end{keyword}

\begin{abstract}
A collection of thermostatically controlled loads (TCLs) -- such as air conditioners and water heaters -- can vary their power consumption within limits to help the balancing authority of a power grid maintain demand supply balance. Doing so requires loads to coordinate their on/off decisions so that the aggregate power consumption profile tracks a grid-supplied reference. At the same time, each consumer's quality of service (QoS) must be maintained. While there is a large body of work on TCL coordination, there are several limitations. One is that they do not provide guarantees on the reference tracking performance and QoS maintenance.  A second limitation of past work is that they do not provide a means to compute a suitable reference signal for power demand of a collection of TCLs. In this work we provide a framework that addresses these weaknesses. The framework enables coordination of an arbitrary number of TCLs that: (i) is computationally efficient, (ii) is implementable at the TCLs with local feedback and low communication, and (iii) enables reference tracking by the collection while ensuring that temperature and cycling constraints are satisfied at every TCL at all times. The framework is based on a Markov model obtained by discretizing a pair of Fokker-Planck equations derived in earlier work by Malhame and Chong~\cite{MalhameElectricTAC:1985}. We then use this model to design randomized policies for TCLs. The balancing authority broadcasts the same policy to all TCLs, and each TCL implements this policy which requires only local measurement to make on/off decisions. Simulation results are provided to support these claims.
\end{abstract}
	
\end{frontmatter}

\ifshowArxiv
{
	\hypersetup{linkcolor=black}
	\tableofcontents
}
\fi

\section{Introduction}	
Many loads are flexible in their power demand: they can vary their demand around a baseline without adversely affecting consumers' quality of service (QoS). The flexibility can be used by a balancing authority (BA) to balance supply and demand in a power grid. The baseline demand refers to the power demand under normal operation, when each load operates only to meet its consumer's QoS without any interference from the BA. Since the rated power of each load is small, it is necessary to use a collection of loads. To provide grid support, the collection has to vary its demand from its baseline. It is envisioned that the BA would supply a reference signal for power demand and the actions of the loads in a collection would be coordinated so that their total demand tracks this reference. 

Thermostatically controlled loads (TCLs) - such as residential air conditioners, heat pumps, and water heaters - are recognized to be valuable sources of flexible demand~\cite{callaway2011achieving,ChenDistributedIMA:2017,matkoccal:2013,LeeGridJESBC:2020}. For an air conditioner or a heat pump, baseline demand is largely dictated by ambient weather conditions. There are at least two QoS requirements: the indoor temperature must be maintained within a prespecified range and compressor short-cycling must be avoided, meaning, once the compressor turns on it cannot turn off until a prespecified time period elapses, and vice versa. Coordination of TCLs involves two conflicting requirements: (i) the TCLs collectively need to track the reference power demand signal, and (ii) every TCL's QoS need to be maintained.

The actuation at each TCL is discrete: it can either be on or off. Direct load control~\cite{ChuNovelTPS:2008}, in which a centralized controller at the BA directly commands on/off status of each TCL is not scalable to large populations. A more scalable idea, that subsequent works on TCL coordination use, is for the BA to broadcast a low dimensional control command to all TCLs, which is translated by each TCL into its actuation command with a local policy. To avoid confusion between the decision making at the BA and a TCL, we use the word ``policy'' to mean the algorithm at a TCL that makes on/off decisions. The literature on decentralized coordination of TCLs differ in their choice of the broadcast signal (i.e., BA's control command) and the policy at the TCL that translates this broadcast to on/off decisions. Coordination architectures can be divided into two broad categories based on these choices: (i) thermostat set point change~\cite{callaway2011achieving,BashashModelingTCST:2013} and (ii) probabilistic control~\cite{matkoccal:2013,LiuDistributedTIE:2016,ChenDistributedIMA:2017,CoffmanVESBuildSys:2018}. These are discussed in more detail in Section~\ref{sec:lit}. 

A framework for coordinating TCLs needs two parts. The coordination scheme is one part. The other part is reference computation: the framework must provide the BA with a method to determine a suitable reference signal for the TCLs. That is, the reference must be such that the TCLs can collectively track the signal while each TCL maintains its QoS. Otherwise, even the best coordination scheme will fail to meet either the BA's need, which is reference tracking, or the consumers' need, which is maintaining indoor temperature etc., or both.

This work presents a unified framework for coordination of a collection of TCLs for providing grid support services. The framework enables both of the above mentioned components, i.e.,  (i) planning a suitable reference for a collection of TCLs and (ii) designing a randomized policy for coordination of the individual TCLs, so that both the BA's requirement and consumers' QoS are satisfied. In the proposed framework, the BA computes randomized control policies for the TCLs and broadcasts them to all the TCLs. Each TCL receives the same policy and implements it using locally measurable information. 
The framework is computationally tractable for an arbitrary number of TCLs. The communication burden is low: only a few numbers need to be broadcast by the BA at every sampling instant. Feedback from TCLs to the BA can be infrequent.

Underlying the framework is: (i) a Markov chain model that is derived from partial differential equations developed in the early work of Malhame and Chong~\cite{MalhameElectricTAC:1985}, (ii) state augmentation to incorporate cycling constraints, and (iii) convexification of the non-convex problem that appears in the design of the randomized control policy for the individual TCL. Additionally, we show that the assumption made about the effect of weather in earlier work~\cite{busmey:CDC:2016} on randomized control, under certain conditions, is in fact true. 

\subsection{Literature review and contribution}\label{sec:lit}

Before reviewing coordination methods, we discuss two interrelated modeling approaches that underpin many of the ideas in the TCL control architectures. These are the Markov chain and partial differential equation (PDE) models~\cite{KaraImpactTSG:2015,MalhameElectricTAC:1985,TotuDemandCST:2017,KhurramIdentificationEPSR:2020,zhang2013aggregated,NazirAnalysisEPSR:2020}, which stem from the early work of Malhame and Chong~\cite{MalhameElectricTAC:1985}. In~\cite{MalhameElectricTAC:1985} a pair of coupled Fokker-Planck equations are developed to model a collection of TCLs under thermostat control. The Fokker-Planck equations are PDEs that describe the time evolution of a certain probability density functions (pdf) over the state space of temperature and on/off mode. The PDEs can be used to model the entire collection or a single TCL: the probability that a single TCL is ``on'' is approximately the fraction of TCLs that are ``on''. Discretizing the PDE yields a Markov chain model, though some works have obtained Markov models without using the PDEs. Hence, \emph{one} set of PDEs can model a collection of TCLs. Thus, methods that base control design on the PDE or Markov chain framework scales well with the number of TCLs.

Due to the lack of scalability of direct load control, we limit our attention to the two broad classes mentioned earlier: (i) thermostat set point changes, (ii) probabilistic policy. There are many forms of probabilistic policy, which can be roughly subdivided into two sub categories: (ii-A) ``bin switching'' and (ii-B) ``randomized policy''. We discuss these in detail below.

In the thermostat setpoint change coordination architecture, a time-varying thermostat set point is broadcast to all TCLs, and each TCL makes on/off decisions based on this new setpoint~\cite{callaway2011achieving,BashashModelingTCST:2013}. This approach may ask for an extremely small change in thermostat setpoint, far below the resolution of the temperature sensor at each TCL. Or it may ask for large changes in thermostat setpoint which will violate occupant comfort. 

In a probabilistic policy architecture, the TCL policy - the mapping from BA's broadcast command to a TCL's on/off decision -  is a non-deterministic mapping. Works in this category typically first model the population of TCLs under thermostat control, which is a deterministic policy, as a Markov chain. The continuous temperature range is divided into a number of discrete bins. A finite dimensional state vector, a probability mass function, is then defined. Each entry of the state vector represents ``the fraction of TCLs that are on (or off) and has temperature in a certain range.'' 

Since the basic Markov model is derived for the thermostat policy, introduction of the BA's control to manipulate TCLs' on/off state is somewhat ad-hoc. In the the bin switching literature, the control command from the BA is chosen so as to affect the fraction of TCLs in the temperature bins directly.  In~\cite{matkoccal:2013}, the BA's control command is chosen to be another vector, whose $i^{\text{th}}$ entry represents ``the fraction of TCLs in bin $i$ to increase/decrease''. A policy is then proposed to translate this command to on/off action at each TCL, which requires knowledge of the state of the Markov model. In~\cite{LiuDistributedTIE:2016}, BA's control command is chosen to be a scalar. The probability of a TCL turning on or off is proportional to this scalar. Subsequent works have proposed various refinements, such as BA's command affecting the rate of fractions to switch instead of fraction to switch~\cite{TotuDemandCST:2017}. Providing performance guarantees with bin switching architecture has proved challenging, either on reference tracking or on QoS maintenance for individual TCLs. 


An alternative to bin switching that still uses probabilistic on/off decision making is randomized policy~\cite{busmey:CDC:2016,ChenDistributedIMA:2017}. A randomized policy is a specification of the conditional probability of turning on or off given the current state. On/off decisions are computed with the help of a random number generator and the policy. In this architecture it is envisioned that the thermostat policy at the TCL is replaced with a randomized policy. In \cite{busmey:CDC:2016,ChenDistributedIMA:2017}, the policy is parameterized by a scalar $\zeta(t)$. Coordination of the population is then achieved by appropriate design of $\zeta(t)$, which is computed and broadcast by the BA. This architecture also uses a Markov model of the evolution of binned temperature, but assumes a certain factorization: the next values of the temperature and mode are conditionally independent given the current joint pair of temperature and mode values under the effects of the randomized policy and exogenous disturbances, especially weather. That is, the transition matrix of the state process is a point wise product of two controlled transition matrices. In an optimal control setting, computation of the BA's control command, $\zeta(t)$, for reference tracking is a non-convex optimization problem~\cite{CoffmanAggregateCDC:2019}. The probability of turning on when temperature exceeds the upper limit, or off when temperature dips below the lower limit, is set to 1 by design. This will ensure the temperature constraint is maintained. Attempts have been made to maintain the cycling constraint~\cite{CoffmanVESBuildSys:2018}. But a formal design method to incorporate the cycling constraint has been lacking.  

A complete framework for coordination of TCL needs not only a control algorithm to make decisions at TCLs, but also a method to compute a \emph{feasible} reference signal for the collection's power demand. Feasible means that no TCL needs to violate local constraints in order for the collection to track the reference. The topic is sometimes described as ``flexibility capacity'' and has been examined in many recent works, with various definitions of flexibility~\cite{hao_aggregate:2015,PaccagnanRangeCDC:2015,CoffmanCharacterizingTPS:2020,coffmanFlexibilityTCL_ArXiV:2020}. A unified treatment of reference design and coordination algorithm design that would provide a complete framework is lacking. 

In short, existing work on TCL coordination has a number of scattered disadvantages. Direct load control suffers from scalability/privacy issues and thermostat set-point methods have implementation issues. Bin switching does not provide guarantees on reference tracking and often requires solving a challenging state estimation problem. Prior work on randomized control requires non-convex optimization and is based on an assumed conditional independence. Finally, there is a lack of unified treatment of the reference design and policy design problems.

In this work we develop a unified framework for coordination of TCLs that addresses the weaknesses of prior work described above. Our major contributions are as follows.
\begin{enumerate}
  \item We provide a complete framework that allows the  BA to compute (a) an optimal reference signal that is feasible for the collection and (b) optimal randomized policies for the TCLs. When the TCLs implement these policies, their total power demand collectively tracks the reference signal and the policies guarantee that temperature and cycling QoS requirements at each TCL are satisfied. Optimal reference means it is closest to what the BA wants while being feasible for the TCLs. Implementation of the policy at a TCL is easy; it requires only local measurements. The communication burden for coordination is also low.  At each sampling time, a randomized control policy - parameterized by a few numbers -  is broadcast to all TCLs. Feedback from TCLs to the BA can be infrequent. 
\item Our framework is based on a careful discretization of the partial differential equation (PDE) model described in~\cite{MalhameElectricTAC:1985}. This discretization shows that a certain ``conditional independence'' that was assumed in~\cite{busmey:CDC:2016} indeed holds. The conditional independence separates the effects of the policy at the TCL (control) and weather (disturbance) on the transition matrix, and greatly facilitates computation of policies.
\item  Numerical experiments are provided to illustrate the efficacy of the framework. Simulations show that TCLs are able to track the optimal reference collectively while each TCL is able to maintain both temperature and cycling constraints. Matlab implementation is made publicly available at~\cite{CoffmanUnifiedCode:2021}.
\end{enumerate} 
Figure~\ref{fig:controlArch} illustrates the two parts of the proposed framework.

\begin{figure}
	\centering
	\includegraphics[width=1\columnwidth]{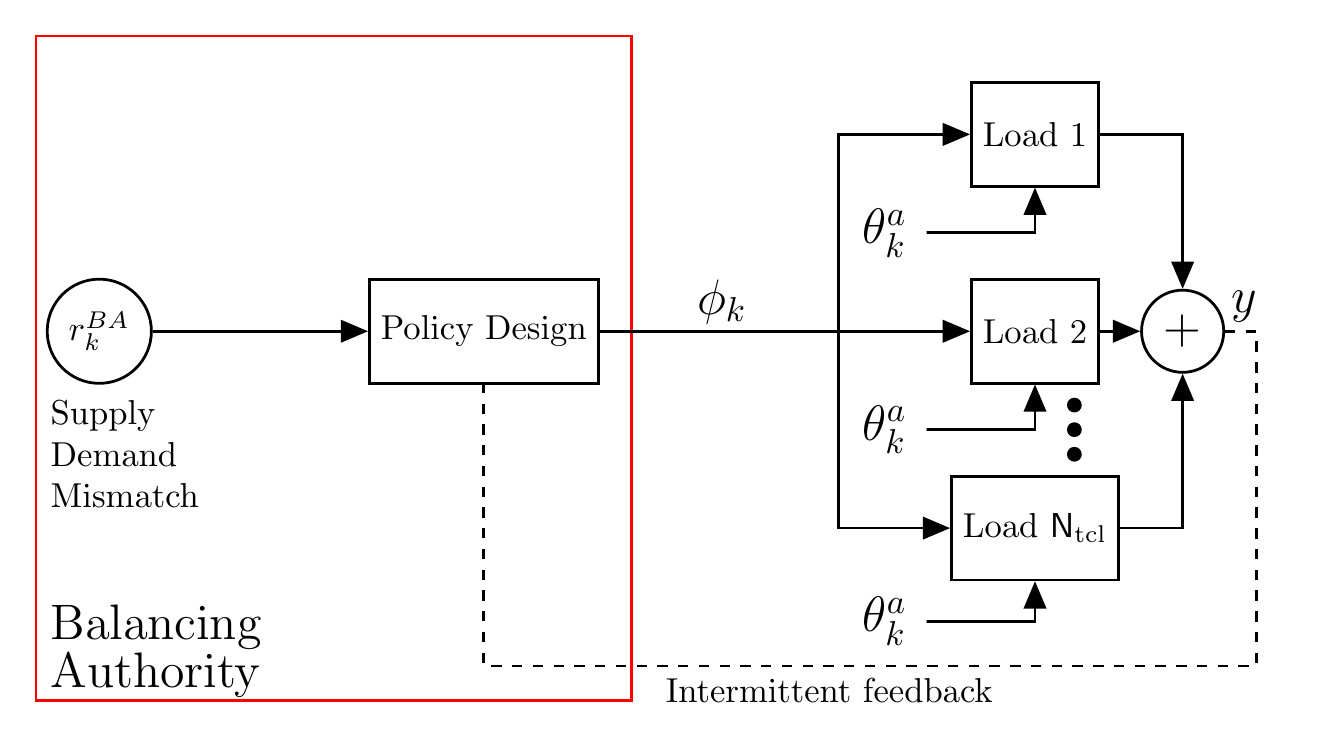}
	\caption{Coordination architecture with the proposed framework.}
	\label{fig:controlArch}
\end{figure}


\ifshowArxiv
The Markov model obtained by discretizing a PDE was presented in~\cite{CoffmanControlACC:2021}. For completeness, we include the discretization in this paper as an Appendix. 
\else
The rest of the paper proceeds as follows. In Section~\ref{sec:ModelDev} the model of the individual TCL is introduced. In Section~\ref{sec:discretzation} the PDEs introduced are discretized and in Section~\ref{sec:indStruc} the structure of the discretized model is identified. Since the PDE discretization was previously reported in~\cite{CoffmanControlACC:2021}, technical details including some of the proofs are moved to the expanded version~\cite{CoffmanUnifiedArxiv:2021}; only the parts necessary for completeness are described in this section. The proposed framework for reference and policy design is presented in Section~\ref{sec:propFrame} and numerical experiments are reported in Section~\ref{sec:numExp}. 
\fi

\subsection{Notation}
The symbol $\mathbb{1}$ denotes the vector of all ones, $\mathbf{e}_i$ denotes the i$^{th}$ canonical basis vector, and $\mathbf{0}$ denotes the zero matrix or vector, all of appropriate dimension. For a vector $v$,  $\text{diag}(v)$ denotes the diagonal matrix with entries of $v$, i.e., $\text{diag}(v)\mathbb{1} = v$. Further, $\otimes$ denotes matrix Kronecker product and $\mathbf{I}_A(\cdot)$ the indicator function of the set $A$.
\section{Modeling: Individual TCL} \label{sec:ModelDev}

A thermostatically controlled load (TCL) is an on/off device that ensures the temperature of a given environment remains within a specified region, e.g., an air conditioner.
During its operation, the TCL must adhere to certain operational requirements (QoS constraints). We consider two: the temperature constraint and the cycling constraint. The temperature constraint is that the TCL's temperature must remain within a prespecified deadband, $[\lambda^{\min}, \lambda^{\max}]$. This is achieved by switching the TCL on or off when it is too hot or cold. The cycling constraint is that the TCL can only change from ``on'' to ``off'' or vice versa once every $\tau$ (discrete) time instants, where $\tau$ is a prespecified constant. The cycling constraint is to ensure the mechanical hardware is not damaged. In both cases, ensuring the two constraints amounts to appropriately deciding when to switch the TCL on or off. 
\ifx 0
\begin{assump} \label{assump:cycConstDecup}
	Let $x_m(t;x_0)$ denote the temperature of the TCL with mode $m$ after $t$ time while starting at $x_0$. Then define the following times, $t^*_1$ and $t^*_2$ as
	\begin{align} \nonumber
		x_{\textOff}\big(t^*_1; \ x_{\textOn}(t_0 + \tau;\tempBin^{\textUB})\big) = \tempBin^{\textUB}, \quad \text{and} \quad x_{\textOn}\big(t^*_2; \ x_{\textOff}(t_0 + \tau;\tempBin^{\textLB})\big) = \tempBin^{\textLB},  
	\end{align}
	where $t_0$ is an arbitrary initial time. We require $\tau \leq t^* \triangleq \min\{t^*_1, \ t^*_2 \}$ and that 
	\begin{align}
		s^{\textOn}(t_0) &= \begin{cases}
		1, & x_{off}(t_0+\tau;x_0) > \tempBin^{\textUB} \ \text{and} \ m(t_0) = 1. \\
		0, & \text{otherwise}
		\end{cases} \\
		s^{\textOff}(t_0) &= \begin{cases}
		1, & x_{on}(t_0+\tau;x_0) < \tempBin^{\textLB} \ \text{and} \ m(t_0) = 0. \\
		0, & \text{otherwise}
		\end{cases}
	\end{align} 
\end{assump}
\fi
\subsection{Temperature dynamics of TCLs}
The typical model for the TCL's temperature $\theta(t)$ in the literature is the following ordinary differential equation (ODE),
\begin{align} \label{eq:detModelTCL}
  \begin{split}
	\frac{d}{dt} \theta(t) &= f_m(\theta,t),
\quad \text{with} \\
	f_m(\theta,t) &= -\frac{1}{RC}\left(\theta - \theta^a(t)\right) - m(t)\frac{\eta P_0}{C}.    
  \end{split}
\end{align}
The rated electrical power consumption is denoted $P_0$ with coefficient of performance (COP) $\eta$. The parameters $R$ and $C$ denote thermal resistance and capacitance, respectively. The signal $\theta^a(t)$ is the ambient temperature. The quantity $m(t)$ is the on/off mode, and in the following we identify $m(t) = 1$ and $m(t) =$ on, as well as $m(t) = 0$ and $m(t) =$ off. We denote arbitrary temperature values through the variable $\lambda$, and the thermostat setpoint as $\lambda^{\text{set}}$. The values $\lambda^\textUB$ and $\lambda^\textLB$ set the upper and lower limit for the temperature deadband.

A model for the temperature state that accounts for modeling errors in~\eqref{eq:detModelTCL} and will be crucial in developing the content in Section~\ref{sec:pdeModel} is the following It\^{o} stochastic differential equation (SDE),
\begin{align} \label{eq:stoModelTCL}
d \theta(t) = f_m(\theta,t)dt + \sigma dB(t).
\end{align}
The term $B(t)$ is Brownian motion with parameter $\sigma>0$, and the quantity $\sigma dB(t)$ captures modeling errors in~\eqref{eq:detModelTCL}. 
In either model, the baseline power for the TCL is the value of $P$ so that $f_1(\lambda^{\text{set}},t) = 0$, solving yields:
\begin{align} \label{eq:invTCLbase}
\text{Baseline Power:} \quad \bar{P}^{\text{ind}}(t) = \frac{\theta^a(t)-\lambda^{\text{set}}}{\eta R}.
\end{align} 
For $\numTCLs$ TCLs the baseline power $\bar{P}(t)$ and maximum power $P_\textAgg$ are, 
\begin{align} \label{eq:powerAgg}
\bar{P}(t) \triangleq \numTCLs\bar{P}^{\text{ind}}(t), \quad \text{and} \quad P_\textAgg \triangleq \numTCLs P_0.
\end{align}
The total electrical power consumption of the collection, whether with thermostat policy or some other policy, is denoted by $y(t)$:
\begin{align}
  \label{eq:y-def}
  y(t) \triangleq P_0\sum_{\ell=1}^{\numTCLs}m^\ell(t) 
\end{align}
where $m^\ell(t)$ is the on/off state of the $\ell$-th TCL.

\subsubsection{Policy (at the TCL)}
The mode state of a TCL evolves according to a policy. The following policy, which we denote as the \emph{thermostat policy}, ensures the temperature constraint:
\begin{align} \label{eq:thermoContLaw}
\lim_{\epsilon \rightarrow 0}\ m(t + \epsilon) = \begin{cases}
1, & \theta(t) \geq \tempBin^{\textUB}. \\
0, & \theta(t) \leq \tempBin^{\textLB}. \\
m(t), & \text{o.w.} 
\end{cases}
\end{align}

We add the following set of assumptions about the individual TCL discussed so far.

\begin{enumerate} 
	\item[\textbf{A.1}] The thermostat policy does not violate the cycling constraint. 
	\item[\textbf{A.2}] For all $t\geq 0$ and $\theta\in [\lambda^\textLB,\lambda^\textUB]$, $f_\textOn(\theta,t) \leq 0$ and $f_\textOff(\theta,t) \geq 0$.
	\item[\textbf{A.3}] The TCL's cycling and temperature constraint are both simultaneously feasible.
\end{enumerate}

The sizing/design of the TCL is most likely to ensure that \textbf{A.1}  holds. With \textbf{A.1} , we depart from discussing the cycling constraint until Section~\ref{sec:propFrame} since up to that point the mode state is assumed to evolve according to~\eqref{eq:thermoContLaw}. 

Assumption \textbf{A.2} states that when the TCL is on, the temperature does not increase and when the TCL is off the temperature does not decrease. All prior works focusing on cooling TCLs (e.g., air conditioners) implicitly make this assumption. Every result that is to follow is also valid for heating TCLs (e.g., a water heater or a heat pump) with a sign reversal.

Like \textbf{A.2}, assumption \textbf{A.3} is also implicit in any work that considers both the TCLs temperature and cycling constraint.

\subsection{PDE model} \label{sec:pdeModel}
We now describe a PDE model of a TCL's temperature with thermostat policy originally derived in~\cite{MalhameElectricTAC:1985}. Consider the following marginal pdfs $\mu_{\textOn}, \mu_{\textOff}$:
\begin{align} \label{eq:probOnState}
	\mu_{\textOn}(\tempBin,t)d\tempBin &= \Prob\left((\tempBin < \theta(t) \leq \tempBin + d\tempBin), \ m(t) = \text{on} \right), \\ \label{eq:probOffState}
	\mu_{\textOff}(\tempBin,t)d\tempBin &= \Prob\left((\tempBin < \theta(t) \leq \tempBin + d\tempBin), \ m(t) = \text{off} \right),
\end{align}
where $\Prob(\cdot)$ denotes probability, $\theta(t)$ evolves according to~\eqref{eq:stoModelTCL} and for now $m(t)$ evolves according to~\eqref{eq:thermoContLaw}. It was shown in~\cite{MalhameElectricTAC:1985} that the densities $\mu_{\textOn}$ and $\mu_{\textOff}$ satisfy the Fokker-Planck equations,
\begin{align} 
	\frac{\partial}{\partial t}\mu_{\textOn}(\lambda,t) &= \frac{\sigma^2}{2}\nabla^2_\lambda\mu_{\textOn}(\lambda,t)-\nabla_\lambda\Big(f_{\textOn}(\lambda,t)\mu_{\textOn}(\lambda,t)\Big)  \label{eq:pdeOnMode} \\
	\frac{\partial}{\partial t}\mu_{\textOff}(\lambda,t) &= \frac{\sigma^2}{2}\nabla^2_\lambda\mu_{\textOff}(\lambda,t)-\nabla_\lambda\big(f_{\textOff}(\lambda,t)\mu_{\textOff}(\lambda,t)\big) \label{eq:pdeOffMode}
\end{align}
that are coupled through their boundary conditions~\cite{MalhameElectricTAC:1985}. \ifshowArxiv The boundary conditions are listed in Appendix~\ref{app:boundCondProof}. \else The boundary conditions are listed in~\cite{CoffmanUnifiedArxiv:2021}. \fi

\begin{remark} \label{rem:LLN}
The coupled equations~\eqref{eq:pdeOnMode}-\eqref{eq:pdeOffMode} can be used to model either: (i) a \emph{single} TCL or (ii) a \emph{collection} of TCLs. For (i) the quantities~\eqref{eq:probOnState}-\eqref{eq:probOffState} represent the \emph{probability} that a single TCLs temperature and on/off mode reside in the respective region. For (ii) the quantities~\eqref{eq:probOnState}-\eqref{eq:probOffState} represent the \emph{fraction} of TCLs whose temperature and on/off mode reside in the respective region. How the equations~\eqref{eq:pdeOnMode}-\eqref{eq:pdeOffMode} (specifically their discretized form) can be used to model an ensemble is discussed further in Section~\ref{sec:aggModel}. 
\end{remark}

\ifx 0  
\begin{figure}
	\centering
	\includegraphics[width=0.75\columnwidth]{histSimCompare_advectEq.pdf}
	\caption{Discrepancy between the state of the advection equation and the histogram of the population for various time steps. Each histogram is over the temperature state for all of the on TCLs at the specified time.}
	\label{fig:advEqPeriodic}
\end{figure}
\subsubsection{Motivation for Stochastic Model}
While transport type arguments can be used to develop a pair of coupled advection equations (equations~\eqref{eq:pdeOnMode}-\eqref{eq:pdeOffMode} with $\sigma^2=0$) for the deterministic model~\cite{BashashModelingTCST:2013}, the state of these advection equations will not agree with the pointwise in time histogram of a population of TCLs simulated with~\eqref{eq:detModelTCL} (see Figure~\ref{fig:advEqPeriodic}). To see why, consider the following: without noise TCLs are periodic whereas discretization of the advection equations yields a Markov transition matrix that is irreducible and aperiodic. Hence, the iteration of this transition matrix will converge to a limiting and invariant distribution, whereas the samples from the TCLs will not since the TCL behavior is periodic. This behavior is shown in Figure~\ref{fig:advEqPeriodic}, the discretized state of the advection equation remains relatively constant while the histogram of the collection does not; their is no suggestion of convergence even after 24 hours.

Thus, from the previous argument, the stochastic model~\eqref{eq:stoModelTCL} has two advantages: (i) it captures modeling errors and heterogeneity~\cite{MouraModelingDSCC:2013} the deterministic model~\eqref{eq:detModelTCL} can not, and (ii) it also guarantees a correspondence between simulation samples from~\eqref{eq:stoModelTCL} and the state of the coupled PDEs~\eqref{eq:pdeOnMode}-\eqref{eq:pdeOffMode} (see Figure~\ref{fig:histSimCompare}). 
\fi

\ifx 0
\subsubsection{Forward thinking motivation}Further, the PDEs that are derived from the stochastic model will be the base of our control oriented model. In the following, we will discretize the PDEs~\eqref{eq:pdeOnMode}-\eqref{eq:pdeOffMode} and then show that the discretized model has special structure. Particularly, the structure elucidates how to model the aggregate under the effects of a arbitrary randomized policy. 
\fi
\begin{figure*}
	\centering
	\includegraphics[width=1.75\columnwidth]{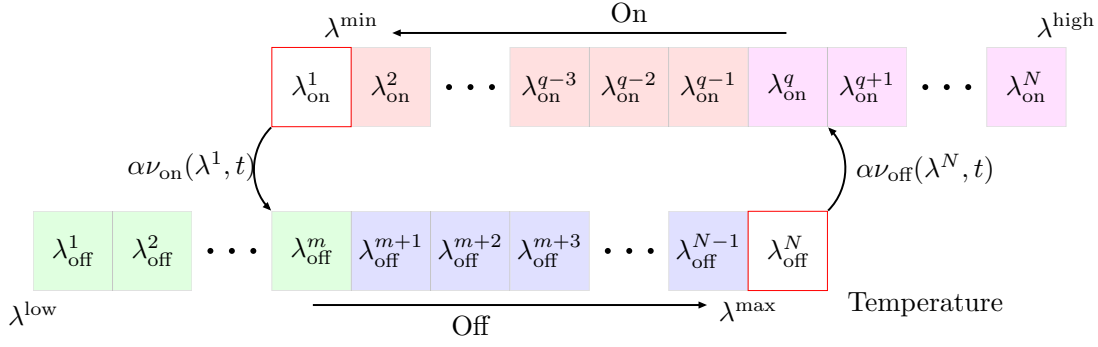}
	\caption{The control volumes (CVs). The colors correspond to the colors found in Figure~\ref{fig:sparPattern}. The values in each CV represent the nodal temperature for the CV. The arrows describe the sign of the convection of the TCL through the CVs. The values are such that $N = m + q$. The terms involving $\alpha$ model rate of transfer between the corresponding CVs due to the thermostat policy, where $\alpha=\gamma + \frac{\sigma^2}{(\Delta \lambda)^2}$. The parameter $\gamma>0$ is a design parameter; see Remark~\ref{rem:conditionsForCondFact}.}
	\label{fig:cvLayout}
\end{figure*}

\section{Markov model from PDE Discretization} \label{sec:discretzation}
We use the finite volume method (FVM) to discretize the PDEs~\eqref{eq:pdeOnMode} and~\eqref{eq:pdeOffMode}. The discretization of~\eqref{eq:pdeOnMode} and~\eqref{eq:pdeOffMode} yields a finite dimensional probabilistic model for a single TCL (equation~\eqref{eq:discDynEsem}). We discretize the PDEs~\eqref{eq:pdeOnMode} and~\eqref{eq:pdeOffMode} in a way that a control input for the BA can then be identified. More on this point will be discussed in Section~\ref{sec:indStruc}, however the discretization here will play a role. 


\subsection{Spatial discretization}
The FVM bins the continuous temperature into $N$ control volumes (CV). The layout of the CVs is shown in Figure~\ref{fig:cvLayout}. The $N$ CVs for both the on and off mode state, as shown in Figure~\ref{fig:cvLayout}, are defined through the nodal temperature values ($\lambda_{\textOn}$ and $\lambda_{\textOff}$) and their boundaries ($\lambda^+_{\textOn}$ and $\lambda^+_{\textOff}$) and  ($\lambda^-_{\textOn}$ and $\lambda^-_{\textOff}$):
\begin{align} \nonumber
	&\lambda_{\textOn} = (\lambda^{i}_{\textOn})_{i=1}^{N}, \quad  \lambda_{\textOn}^+ = \lambda_{\textOn} + \frac{\Delta\lambda}{2}, \quad \lambda_{\textOn}^- = \lambda_{\textOn} - \frac{\Delta\lambda}{2}, \\ \nonumber
	&\lambda_{\textOff} = (\lambda^{i}_{\textOff})_{i=1}^{N}, \quad \lambda_{\textOff}^+ = \lambda_{\textOff} + \frac{\Delta\lambda}{2}, \quad \lambda_{\textOff}^- = \lambda_{\textOff} - \frac{\Delta\lambda}{2},
\end{align}
where $\Delta\lambda$ is the CV width.
All intermediate values of $\lambda_{\textOn}$ and $\lambda_{\textOff}$ are separated from each other by $\Delta\lambda$. The values in $\lambda^+_{\textOn}$ (respectively, $\lambda^+_{\textOff}$) are the right edges of the CVs and the values $\lambda^-_{\textOn}$ (respectively, $\lambda^-_{\textOff}$) are the left edges of the CVs, for example, $\lambda^{1,-}_{\textOff} = \lambda^{\text{low}}$. The quantities $\lambda^{\text{min}}$ and $\lambda^{\text{max}}$ specify the thermostat deadband, and are \emph{different} from the quantities $\lambda^{\text{high}}$ and $\lambda^{\text{low}}$ (see Figure~\ref{fig:cvLayout}).

\begin{figure}
	\centering
	\includegraphics[width=1\columnwidth]{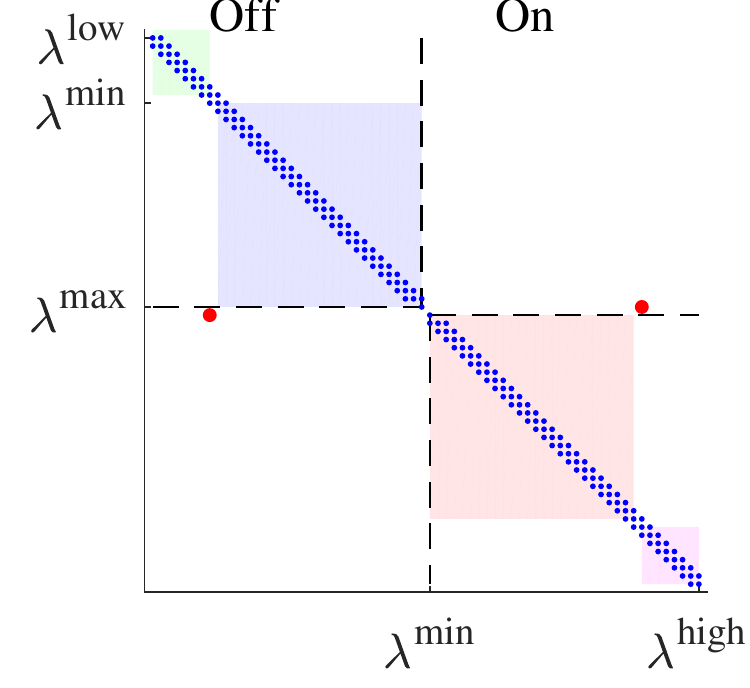}
	\caption{Sparsity pattern of the matrix $A(t)$ for $N=51$ CVs for both the on and off state. The colors correspond to the colors found in Figure~\ref{fig:cvLayout}.}
	\label{fig:sparPattern}
      \end{figure}

\ifshowArxiv      
The steps taken to obtain the spatially discretized PDEs is detailed in  Appendix~\ref{app:pdeDisc}. To give an overview, the discretization is done in two parts: (i) for the internal CV's (Appendix~\ref{app:pdeDiscInt}) and (ii) for the boundary CV's (Appendix~\ref{app:boundCondProof}). We describe here the end result of the derivation in Appendix~\ref{app:pdeDisc}.
\else
The steps taken to obtain the spatially discretized PDEs is detailed in~\cite{CoffmanUnifiedArxiv:2021}. To give an overview, the discretization is done in two parts: (i) for the internal CV's and (ii) for the boundary CV's. We describe here the end result of the derivation.
\fi
First, define the following quantities
\begin{align}
	\nu_{\textOff}(\lambda^i,t) &\triangleq \mu_{\textOff}(\lambda^i,t)\Delta\lambda, \quad \text{and} \\ \nu_{\textOn}(\lambda^i,t) &\triangleq \mu_{\textOn}(\lambda^i,t)\Delta\lambda,
\end{align}
then construct the row vector, $\nu(t) = [\nu_{\textOff}(t),\nu_{\textOn}(t)]$. with
\begin{align}
	\nu_{\textOff}(t) &\triangleq [\nu_{\textOff}(\lambda^1,t), \dots,\nu_{\textOff}(\lambda^N,t)], \quad \text{and} \\
	\nu_{\textOn}(t) &\triangleq [\nu_{\textOn}(\lambda^1,t), \dots,\nu_{\textOn}(\lambda^N,t)].
\end{align}
By combining all the ordinary differential equations (ODEs) for the $\nu_{\textOff}(\lambda^i,t), \nu_{\textOn}(\lambda^i,t)$ for all the $i$'s, we obtain the linear time varying system
\begin{align} \label{eq:dynContMC}
\frac{d}{dt}\nu(t) = \nu(t)A(t).
\end{align}
The sparsity pattern of $A(t)$ is shown in Figure~\ref{fig:sparPattern}. The system~\eqref{eq:dynContMC} is the spatially discretized version of the PDEs~\eqref{eq:pdeOnMode}-\eqref{eq:pdeOffMode}. 
\ifx 0
We now define certain sub blocks of the matrix $A(t)$, which are important for later discussion. Temporarily introducing the notation $A_{[i:j,l:k]}$ to denote the submatrix of $A$ with rows $i$ to $j$ and columns $l$ to $k$, we then define
\begin{align} \label{eq:AonOff}
	A^\textOn(t)&\triangleq A(t)_{[N+1:2N,N+1:2N]}, \quad \text{and} \\
	A^\textOff(t)&\triangleq A(t)_{[1:N,1:N]}.
\end{align}
Using the definitions of $A^\textOn(t)$ and $A^\textOff(t)$ we also define,
\begin{align} \label{eq:AHatonOff}
\hat{A}^\textOn(t)&\triangleq A^\textOn(t)_{[1:q-1,1:q-1]}, \quad \text{and} \\ \nonumber 
\hat{A}^\textOff(t)&\triangleq A^\textOff(t)_{[m+1:N,m+1:N]},
\end{align}
where $m$ and $q$ denote certain CV indices (shown in Figure~\ref{fig:cvLayout}).
The above can all be visualized through the shaded regions ($\hat{A}^\textOn(t)$ and $\hat{A}^\textOff(t)$) and the regions separated by the black dashed lines ($A^\textOn(t)$ and $A^\textOff(t)$) in Figure~\ref{fig:sparPattern}. \fi
The matrix $A(t)$ also satisfies the properties of a transition rate matrix, described in the following lemma.
\begin{lemma} \label{lem:rateMat}
	For all $t$, the matrix $A(t)$ is a transition rate matrix. That is, for all $t$
	\begin{align} \nonumber
	\text{(i):} &\quad A(t)\mathbb{1} = \mathbf{0}. \\ \nonumber
	\text{(ii):} &\quad \text{for all} \ i, \ A_{i,i}(t) \leq 0, \ \text{and} \ \text{for all} \ j\neq i \  A_{i,j}(t) \geq 0. 
	\end{align}
\end{lemma}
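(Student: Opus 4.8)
The plan is to verify the two defining properties of a transition rate matrix directly from the finite volume discretization of the Fokker–Planck equations \eqref{eq:pdeOnMode}--\eqref{eq:pdeOffMode}. Property (i), $A(t)\mathbb{1}=\mathbf{0}$, expresses conservation of probability (total mass), and property (ii), nonpositive diagonal and nonnegative off-diagonal entries, expresses that $A(t)$ generates a genuine continuous-time Markov semigroup. Both are structural facts about how the FVM produces fluxes between adjacent control volumes.

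First I would establish (i). The key observation is that the FVM is a \emph{conservative} discretization: the spatial operator on the right-hand sides of \eqref{eq:pdeOnMode}--\eqref{eq:pdeOffMode} is a pure divergence (of the diffusive flux $\tfrac{\sigma^2}{2}\nabla_\lambda\mu$ minus the convective flux $f_m\mu$), so when one integrates over a control volume the net change in mass equals the flux in through one face minus the flux out through the other. Summing over all control volumes for both modes, every internal face flux appears exactly twice with opposite signs and cancels; the only potentially surviving terms are the fluxes through the external boundaries $\lambda^{\text{low}}$ and $\lambda^{\text{high}}$, which vanish because the boundary conditions from \cite{MalhameElectricTAC:1985} are no-flux (reflecting) there, together with the thermostat switching terms (the $\alpha$-terms in Figure~\ref{fig:cvLayout}), which merely \emph{transfer} mass from an ``on'' boundary CV to an ``off'' boundary CV (and vice versa) and hence also cancel in the total sum. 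In matrix terms this says precisely that each row of $A(t)$ sums to zero, i.e. $A(t)\mathbb{1}=\mathbf 0$. The cleanest way to write this is to appeal to the explicit row expressions of $A(t)$ recorded in the discretization appendix and note that each was constructed as (sum of rates into a CV) minus (sum of rates out of that CV), so row sums telescope to zero by inspection.

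Next I would establish (ii). For an off-diagonal entry $A_{i,j}(t)$ with $i\neq j$, the value is a transfer \emph{rate} from CV $j$ to CV $i$; by the structure in Figure~\ref{fig:cvLayout} these are of three kinds: diffusive coupling terms $\sigma^2/(\Delta\lambda)^2 \ge 0$, convective (upwind) terms which under Assumption \textbf{A.2} have a definite sign — $f_{\textOn}\le 0$ on the deadband pushes ``on'' mass monotonically in one direction and $f_{\textOff}\ge 0$ pushes ``off'' mass in the other, so each upwind coefficient is a nonnegative multiple of $|f_m|/\Delta\lambda$ — and the thermostat transfer terms $\alpha=\gamma+\sigma^2/(\Delta\lambda)^2>0$. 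Hence every off-diagonal entry is nonnegative. The nonpositivity of the diagonal then follows immediately from (i): $A_{i,i}(t) = -\sum_{j\neq i} A_{i,j}(t) \le 0$ since the summands are all nonnegative.

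The main obstacle is the \textbf{boundary control volumes and the mode-coupling}: the internal CVs are routine (standard upwind-plus-central FVM stencils are textbook-known to give rate matrices), but one must check carefully that (a) the reflecting boundary conditions at $\lambda^{\text{low}}$, $\lambda^{\text{high}}$ really do kill the external fluxes so that no mass leaks out of the joint $2N$-dimensional state space, and (b) the thermostat switching, which is what couples the two Fokker–Planck equations and is the nonstandard ingredient here, is modeled as a conservative transfer (a rate $\alpha$ leaving one CV and the identical rate $\alpha$ entering the paired CV of the other mode) rather than as a source or sink. Both are exactly what the derivation in the appendix is designed to guarantee, so the proof reduces to pointing at the constructed entries of $A(t)$ and reading off the sign pattern and the zero row sums; I do not expect any genuine computation beyond bookkeeping of the stencil coefficients.
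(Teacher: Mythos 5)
Your proposal is correct and follows essentially the same route as the paper's proof: both verify the sign pattern of the off-diagonal entries directly from the upwind/diffusive stencil coefficients using Assumption \textbf{A.2}, and both obtain the zero row sums from the conservative (telescoping face-flux) structure of the FVM together with the fact that the thermostat $\alpha$-terms transfer mass between paired CVs rather than creating or destroying it. The only cosmetic difference is that you deduce $A_{i,i}(t)\le 0$ from (i) and the nonnegativity of the off-diagonal entries, whereas the paper reads the diagonal signs off the stencil directly; both are immediate.
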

\ifshowArxiv
\begin{proof}
	See Appendix~\ref{app:rateMatProof}.
      \end{proof}
      \else
\begin{proof}
See \cite{CoffmanUnifiedArxiv:2021}.
\end{proof}      
\fi

\begin{remark} \label{rem:CFD}
The choice of the FVM and how we discretize the convection and diffusion terms appearing in~\eqref{eq:pdeOnMode}-\eqref{eq:pdeOffMode} is important for $A(t)$ to satisfy the conditions in Lemma~\ref{lem:rateMat}. This issue is well known in the CFD literature, and also recognized in the related work~\cite{BenenatiTractableCDC:2019}. If a finite difference method had been used with central differences for both diffusion and convection terms, the resulting $A(t)$ would require restrictive conditions on both $\sigma^2$ and $\Delta\lambda$ to satisfy the properties in Lemma~\ref{lem:rateMat}~\cite{VersteegIntroductionBook:2007}. 
\end{remark}

\subsection{Temporal discretization}
To temporally integrate the dynamics~\eqref{eq:dynContMC} we use a first order Euler approximation with time step $\Delta t>0$. Making the identifications $\nu_{k}\triangleq\nu(t_k)$ and $A_k \triangleq A(t_k)$ we have
\begin{align} \label{eq:discDynEsem}
	\nu_{k+1} &= \nu_kP_k, \quad \text{with} \quad P_k = I + \Delta tA_k.
\end{align}
In the continuous time setting elements of the vector $\nu(t)$ were referred to as, for example, $\nu_{\textOn}(\lambda^i,t)$. The counterpart to this, in the discrete time setting, is referring to elements of $\nu_k$ as, for example, $\nu_{\textOn}[\lambda^i,k]$. We further have the following.

\begin{lemma}\label{lem:CFL}
The matrix $P_k$ is a Markov transition probability matrix if
	\begin{align} \nonumber
	\forall \ i, \ \text{and} \ \forall \ k, \quad 0 < \Delta t \leq \left|[A_k]_{i,i}\right|^{-1}.
	\end{align}
        where $[A_k]_{i,i}$ is the $i^{th}$ diagonal element of the matrix $A_k$.
\end{lemma}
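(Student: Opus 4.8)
The plan is to verify the two defining properties of a Markov transition probability matrix for $P_k = I + \Delta t A_k$: (a) every entry of $P_k$ is nonnegative, and (b) every row of $P_k$ sums to one. Property (b) is immediate from Lemma~\ref{lem:rateMat}(i): since $A_k \mathbb{1} = \mathbf{0}$, we have $P_k \mathbb{1} = (I + \Delta t A_k)\mathbb{1} = \mathbb{1} + \Delta t A_k \mathbb{1} = \mathbb{1}$, regardless of the value of $\Delta t$. So the content of the lemma is entirely in property (a), and this is where the stated step-size condition enters.

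For property (a), I would split into off-diagonal and diagonal entries. For $j \neq i$, the $(i,j)$ entry of $P_k$ is $\Delta t\, [A_k]_{i,j}$, which is nonnegative because $\Delta t > 0$ and $[A_k]_{i,j} \geq 0$ by Lemma~\ref{lem:rateMat}(ii). For the diagonal entry, $[P_k]_{i,i} = 1 + \Delta t\, [A_k]_{i,i}$; since $[A_k]_{i,i} \leq 0$ (again Lemma~\ref{lem:rateMat}(ii)), this equals $1 - \Delta t\, |[A_k]_{i,i}|$, which is nonnegative precisely when $\Delta t\, |[A_k]_{i,i}| \leq 1$, i.e.\ $\Delta t \leq |[A_k]_{i,i}|^{-1}$. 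Requiring this for every $i$ and every $k$ — equivalently, $\Delta t \leq \big(\sup_{i,k} |[A_k]_{i,i}|\big)^{-1}$ — gives exactly the hypothesis, and so all entries of $P_k$ lie in $[0,1]$. Combined with the row-sum property, $P_k$ is a stochastic matrix, hence a Markov transition probability matrix. (One should note the degenerate case $[A_k]_{i,i} = 0$, where the bound $|[A_k]_{i,i}|^{-1} = +\infty$ imposes no constraint and $[P_k]_{i,i} = 1$ trivially.)

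There is essentially no hard part here — this is the discrete-time analogue of the classical CFL/uniformization condition, and it reduces to the elementary observation that adding a small positive multiple of a rate matrix to the identity keeps the diagonal nonnegative while preserving nonnegativity of the off-diagonals and the unit row sums. The only point requiring a moment's care is making sure the step-size bound is taken uniformly over both the state index $i$ and the time index $k$, so that a single $\Delta t$ works for the whole discretized trajectory~\eqref{eq:discDynEsem}; this is already what the quantifiers in the statement demand. I would therefore present the argument in three short lines (row sums, off-diagonals, diagonals) and conclude.
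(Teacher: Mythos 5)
Your proposal is correct and follows essentially the same route as the paper's proof: unit row sums from $A_k\mathbb{1}=\mathbf{0}$, nonnegativity of off-diagonal entries from Lemma~\ref{lem:rateMat}(ii), and nonnegativity of the diagonal entries of $I+\Delta t A_k$ from the stated bound on $\Delta t$. Your version is only slightly more detailed (noting the degenerate case $[A_k]_{i,i}=0$ and the uniformity of the bound over $i$ and $k$), but the argument is the same.
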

\begin{proof}
	From Lemma~\ref{lem:rateMat} we have that $P_k\mathbb{1} = I\mathbb{1} + \Delta tA_k\mathbb{1} = \mathbb{1}$ since $A_k\mathbb{1} = 0$. Also from Lemma~\ref{lem:rateMat}, every element of $A_k$ is non-negative, save for the diagonal elements. Under the hypothesis on $A_k$, then every diagonal element of $I + \Delta tA_k$ will be in $[0,1]$. 
\end{proof}

\begin{remark}\label{rem:N-vs-dt}
	The bound on the time step $\Delta t$ given in Lemma~\ref{lem:CFL} is $O(\Delta \lambda)$, which follows from the PDE discretization; \ifshowArxivAlt see~\cite[Appendix B]{CoffmanUnifiedArxiv:2021} \fi \ifshowArxiv see Appendix B \fi. Since $\Delta \lambda = \frac{\lambda^{\text{high}}-\lambda^{\textLB}}{N}$, as the temperature resolution $\Delta \lambda$ becomes finer the time resolution $\Delta t$ must also become finer at the same rate. See also Remark~\ref{rem:conditionsForCondFact} for a related comment.
\end{remark}

\section{Discrete space model of a TCL: structure and grid friendly policies}
\label{sec:indStruc}
Recall that the dynamics~\eqref{eq:discDynEsem} derived in the previous section was for the thermostat policy. We now delve into the structure of these dynamics so to introduce a BA control input. We first formalize a discrete state space for the dynamics~\eqref{eq:discDynEsem}. We will then show that the transition matrix in~\eqref{eq:discDynEsem} can be written as $P_k = \Phi G_k$ where $\Phi$ depends on the thermostat policy and $G_k$ on the TCL temperature dynamics and weather. The isolation of the policy then indicates how a BA could introduce grid friendly policies in place of the thermostat policy $\Phi$.

\subsection{Discrete state space}
When the conditions of Lemma~\ref{lem:CFL} are met $P_k$ is a transition matrix and hence each $\nu_k$ is a marginal pmf if $\nu_0$ is a pmf. The structure of this marginal is given from~\eqref{eq:probOnState} for the on state (a similar interpretation holds for the off state) as,
\begin{align} 
\nu_{\textOn}[\lambda^i,k] 
&=\Prob\left(\theta(t_k) \in \text{CV}(i), \ m(t_k) = \text{on} \right),
\end{align} 
where $\theta(t_k)$ is the temperature. Now denote, $\theta_k \triangleq \theta(t_k)$, $m_k \triangleq m(t_k)$, and
\begin{align} \label{eq:binnedState}
I_k \triangleq \sum_{i=1}^{N}i\mathbf{I}_{ \text{CV}(i)}(\theta_k,m_k).
\end{align}
The quantity $I_k$ indicates which CV the TCLs temperature resides in at time $k$. It also is a function of $m_k$ since the CV index for the on mode is different from the index for the off mode. We then define the following discrete state space:
\begin{align}
	\stateSetZ \triangleq \{m \in\{\textOn, \textOff\}, \ I \in \{1,\dots,N\} \},
\end{align}
with cardinality $\left|\stateSetZ\right| = 2N$.
Using the newly defined quantity $I_k$ we rewrite the marginals $\nu_{\textOn}[\lambda^i,k]$ and $\nu_{\textOff}[\lambda^i,k]$ as functions on $\stateSetZ$,
\begin{align} \label{eq:margNu}
\nu_{\textOn}[\lambda^i,k] &= \Prob\left(I_k = i, \ m_k = \text{on} \right), \quad \text{and} \\
\nu_{\textOff}[\lambda^i,k] &= \Prob\left(I_k = i, \ m_k = \text{off} \right).
\end{align}
From the above, the matrix $P_k$ (with the conditions of Lemma~\ref{lem:CFL} satisfied) is the transition matrix for the joint process $(I_k,m_k)$ on the state space $\stateSetZ$. The dynamic equation $\nu_{k+1} = \nu_kP_k$ is then a probabilistic model for a TCL with state space $\stateSetZ$ and operating under the thermostat policy. 

\subsection{Conditional independence in $P_k$} \label{sec:condIndPk}
 In the following, we refer to the values of $I_k$ with $i$ and $j$ and the values of $m_k$ with $u$ and $v$. We introduce the following notation to refer to the elements of the transition matrix $P_k$: 
\begin{align} 
  &   
P_{k}((i,u),(j,v)) \triangleq \\ \nonumber
 & \Prob\Big(I_{k+1} = j, \ m_{k+1} = v \ \Big\vert \ I_k = i, \  m_k = u, \ \theta^a_k = w_k\Big).
\end{align}
Recall, the matrix $P_k$ is derived for the thermostat policy. We will now show that the matrix $P_k$ can be written as the product of two matrices. One depends only on the thermostat policy (control) and the other depends only on weather and TCL temperature dynamics. That is, we show that each entry of $P_k$ factors as
\begin{align} \label{eq:condIndFact}
	P_{k}((i,u),(j,v)) = \phi^\text{TS}_u(v \ \vert \ i)P^u_{k}(i,j)
\end{align}
where, for each given values of $\theta^a_k$, $P^u_{k}(i,j)$ is a \emph{controlled transition matrix} on \stateSetZ:
\begin{align}\label{eq:Pu}	 
	P^u_{k}(i,j)\triangleq\Prob\left(I_{k+1} = j \ \vert \ I_k = i, \ m_k = u, \ \theta^a_k = w_k  \right)
\end{align}
and $\phi^\text{TS}_u(v \ \vert \ i)$ is an instance of a \emph{randomized policy} $\phi_u(v \ \vert \ i)$  on \stateSetZ:
\begin{align}\label{eq:detPol}	
	&\phi_u(v \ \vert \ i) \triangleq\Prob\left(m_{k+1} = v \ \vert \ I_{k} = i, \ m_{k} = u \right).
\end{align}
We show the factorization~\eqref{eq:condIndFact} through construction next.


\subsubsection{Constructing the factorization}
The quantity $\phi^\text{TS}_u(v \ \vert \ i)$ in~\eqref{eq:detPol} is the thermostat policy on $\stateSetZ$, which is formally defined as follows.
\begin{defn}\label{def:detPol}
	The thermostat policy on $\stateSetZ$ is specified by the two vectors, $\phi^\text{TS}_{\textOff},\phi^\text{TS}_{\textOn} \in \mathbb{R}^N$, where
$\phi^\text{TS}_{\textOff} \triangleq \phi^\text{TS}_\textOff(\textOn \ \vert \ \cdot) = \mathbf{e}_N$, 
$	\phi^\text{TS}_{\textOn} \triangleq \phi^\text{TS}_\textOn(\textOff \ \vert \ \cdot) = \mathbf{e}_1$, and $\phi^\text{TS}_\textOff(\textOff \ \vert \ \cdot) \triangleq 1 - \phi^\text{TS}_\textOff$, $\phi^\text{TS}_\textOn(\textOn \ \vert \ \cdot) \triangleq 1 - \phi^\text{TS}_\textOn$.
\end{defn}
The quantity $P^u_{k}(i,j)$ in~\eqref{eq:Pu} represents the policy-free (open loop) evolution of the TCL on $\stateSetZ$. That is, it describes how the TCLs temperature evolves under a fixed mode. We define matrices with entries $P^u_{k}(i,j)$ next.
\begin{defn} \label{def:contFree}
	Let $P_k^\textOff,P_k^\textOn\in\mathbb{R}^{N\times N}$ have $(i,j)$ entries given by,
	\begin{align} \nonumber
	P_k^\textOff(i,j) &= P_{w_k}((i,\textOff),(j,\textOff)), \quad i \neq N \ \text{and} \ j \neq N, \\ \nonumber
	P_k^\textOn(i,j) &= P_{w_k}((i,\textOn),(j,\textOn)), \quad i \neq 1 \ \text{and} \ j \neq 1,
	\end{align}
	with $P_k^\textOff(N,N) = 1$ and $P_k^\textOn(1,1) = 1$.
\end{defn}
The quantities defined in Definition~\ref{def:detPol} and~\ref{def:contFree} correspond to entries of $P_k$. To construct the promised factorization, from these definitions, the idea is to construct its four sub-matrices that correspond to all possible combinations of $u,v \in \{\textOn,\textOff\}$ (see Figure~\ref{fig:sparPattern}). For example, the $\textOff-\textOff$ quadrant of $P_k$ is given by the matrix product
\begin{align} \nonumber
	\big(I - \text{diag}(\phi^\text{TS}_\textOff)\big)P^\textOff_k.
\end{align}
However, since the temperature associated with the $i^{th}$ CV for the on mode is not the same temperature associated with the $i^{th}$ CV for the off state (see Figure~\ref{fig:cvLayout}) it is \emph{not} true that the $\textOff-\textOn$ quadrant of $P_k$ is given as $\text{diag}(\phi^\text{TS}_\textOff)P^\textOff_k$. The entries of the matrix $P^\textOff_k$ need to be re-arranged so to correctly account for the difference in CV index between the on/off mode. We define such correctly re-arranged matrices next.

\begin{defn} \label{def:Smat}
	Let $I^\textOff = \{m,\dots,N\}$, $I^\textOn = \{1,\dots,q \}$, $m^- = m-1$, and $S_k^\textOff,S_k^\textOn \in \mathbb{R}^{N\times N}$ with $(i,j)$ entries
	\begin{align}
		S_k^\textOff(i,j-m^-) &= \begin{cases}
		P_k^\textOff(i,j) & i,j \in I^\textOff \\
		0 & \text{otherwise}.
		\end{cases} \\
		S_k^\textOn(i,j+m^-) &= \begin{cases}
		P_k^\textOn(i,j) & i,j \in I^\textOn \\
		0 & \text{otherwise}.
		\end{cases}
	\end{align}
\end{defn}
The above definition is based on the construction that $N =q+m$.
The quantities in Definition~\ref{def:Smat} let us construct, e.g., the $\textOff-\textOn$ quadrant of $P_k$ as $\text{diag}(\phi_{\textOff}^\text{TS})S_k^\textOff$.

The next result shows that $P_k = \Phi^\textTS G_k$ under certain conditions and for appropriate choices of the matrices $\Phi^\textTS$ and $G_k$.

\begin{lemma} \label{lem:condFact}
  	Let the time discretization period $\Delta t$ and the parameter $\alpha$ that appears as a design choice in discretizing the PDEs to ODEs be chosen to satisfy $\alpha = (\Delta t)^{-1}$. Let $\Phi^\text{TS}_{\textOff} =\text{diag}(\Phi^\text{TS}_{\textOff})$ and $\Phi^\text{TS}_{\textOn}=\text{diag}(\Phi^\text{TS}_{\textOn})$, and
	\begin{align} \label{eq:polMatInLem}
		\Phi^\textTS & \triangleq \begin{bmatrix}
			I - \Phi^\text{TS}_{\textOff} & \Phi^\text{TS}_{\textOff} & \mathbf{0} & \mathbf{0} \\
			\mathbf{0} & \mathbf{0} & \Phi^\text{TS}_{\textOn} & I - \Phi^\text{TS}_{\textOn}
		\end{bmatrix} \quad \text{and} \\
		G_k & \triangleq \begin{bmatrix}
		\mathbf{0} & S_k^{\textOff} & \mathbf{0} & P_k^{\textOn} \\
		P_k^{\textOff} & \mathbf{0} & S_k^{\textOn} & \mathbf{0}
		\end{bmatrix}^T,
	\end{align}
	then
        \begin{align}
          \label{eq:condIndFactMat}
          P_k = \Phi^\textTS G_k.
        \end{align}
      \end{lemma}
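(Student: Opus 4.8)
The plan is to prove~\eqref{eq:condIndFactMat} by a direct block computation. Starting from $P_k = I + \Delta t A_k$ (see~\eqref{eq:discDynEsem}) together with the structure of $A_k$ produced by the FVM discretization of~\eqref{eq:pdeOnMode}--\eqref{eq:pdeOffMode} (Figure~\ref{fig:sparPattern}), I would partition $P_k$ into its four $N\times N$ quadrants indexed by the mode pairs $(u,v)\in\{\textOff,\textOn\}^2$. The matrix $A_k$ splits into (i) intra-mode blocks coming from the convection and diffusion terms of the two Fokker--Planck equations and (ii) inter-mode coupling entries, carrying the rate $\alpha$, coming from the discretized boundary conditions; it is part (ii) that encodes the thermostat switching in the discrete-state model.

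First I would invoke the hypothesis $\alpha=(\Delta t)^{-1}$. In $P_k = I + \Delta t A_k$ every $\alpha$-rate transfer then becomes an entry equal to $\Delta t\,\alpha = 1$, so the boundary-coupling transitions become deterministic: the full transition probability out of the hot off-CV (CV $N$) is directed into the on block in one step, and symmetrically at the cold on-CV (CV $1$). This is exactly the content of Definition~\ref{def:detPol}, where the thermostat policy on $\stateSetZ$ is the pair of $0/1$ vectors $\phi^\text{TS}_\textOff=\mathbf{e}_N$, $\phi^\text{TS}_\textOn=\mathbf{e}_1$. One should also check that $\Delta t\,\alpha=1$ keeps $P_k$ a stochastic matrix, i.e.\ the hypothesis of Lemma~\ref{lem:CFL} still holds at these CVs; this is where the slack afforded by the design parameter $\gamma>0$ in $\alpha=\gamma+\sigma^2/(\Delta\lambda)^2$ is used (cf.\ Remark~\ref{rem:conditionsForCondFact}).

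Next I would identify each quadrant with the matrices of Definitions~\ref{def:contFree} and~\ref{def:Smat}. A TCL that stays off has its CV index evolve by the intra-off dynamics, which -- with CV $N$ turned into a self-loop via $P_k^\textOff(N,N)=1$ -- is precisely $P_k^\textOff$; since the ``stay off'' probability is $1-\phi^\text{TS}_\textOff(i)$, the off--off quadrant is $(I-\Phi^{\text{TS}}_\textOff)P_k^\textOff$. A TCL switches off$\to$on only out of CV $N$, with probability $\phi^\text{TS}_\textOff$, and its temperature must then be re-indexed from the off grid to the on grid; that shift (by $m^-=m-1$, using $N=q+m$) is exactly Definition~\ref{def:Smat}, so the off--on quadrant is $\Phi^{\text{TS}}_\textOff S_k^\textOff$. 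The on--on and on--off quadrants follow by the mirror-image argument, as $(I-\Phi^{\text{TS}}_\textOn)P_k^\textOn$ and $\Phi^{\text{TS}}_\textOn S_k^\textOn$. Finally I would carry out the block multiplication of $\Phi^\textTS$ and $G_k$ in~\eqref{eq:polMatInLem}: its many zero blocks force each quadrant of the product to select exactly one of $P_k^\textOff,S_k^\textOff,S_k^\textOn,P_k^\textOn$ together with the matching diagonal policy block, so that $\Phi^\textTS G_k$ reproduces the block matrix above entry-by-entry, giving $P_k=\Phi^\textTS G_k$. The scalar form~\eqref{eq:condIndFact} is then one entry of this identity.

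The main obstacle is the index bookkeeping: one must track carefully how the on-mode CV indices are offset from the off-mode CV indices (the reason $S_k^\textOff,S_k^\textOn$ are needed in the first place), verify that the supports $I^\textOff=\{m,\dots,N\}$ and $I^\textOn=\{1,\dots,q\}$ line up under $N=q+m$, and confirm that the $\alpha$-entries of $A_k$ sit in exactly the rows and columns needed so that, after the re-indexing, the block product deposits each piece in the correct quadrant of $P_k$. Once the FVM stencil near the deadband edges and the relation $\alpha\Delta t=1$ are in hand the remaining algebra is routine, but it is easy to commit an off-by-one error.
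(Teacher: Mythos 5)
Your proposal is correct and follows essentially the same route as the paper's own proof: invoke $\alpha\Delta t=1$ so that in $P_k=I+\Delta t A_k$ the boundary-coupling entries become $1$ (making the thermostat switch deterministic and justifying $\phi^\text{TS}_\textOff=\mathbf{e}_N$, $\phi^\text{TS}_\textOn=\mathbf{e}_1$), then carry out the block multiplication $\Phi^\textTS G_k$ and match each quadrant with $P_k^\textOff$, $S_k^\textOff$, $S_k^\textOn$, $P_k^\textOn$ via the row-zeroing/re-indexing bookkeeping. Your additional remarks on preserving stochasticity and the $S_k$ index shift are consistent with the paper's Definition~\ref{def:Smat} and Remark~\ref{rem:conditionsForCondFact}, so nothing further is needed.
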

\ifshowArxivAlt
\begin{proof}
	See Appendix in~\cite{CoffmanUnifiedArxiv:2021}.
\end{proof}	
\fi
\ifshowArxiv
\begin{proof}
	See Appendix~\ref{app:condFactProof}.
\end{proof}	
\fi

\begin{remark}\label{rem:conditionsForCondFact}
The condition $\alpha = 1/\Delta t$ can be satisfied as long as time and temperature discretization intervals are chosen to satisfy $\Delta t < (\Delta \lambda)^2/\sigma^2$.  To understand how, recall that in the discretizing the PDE to the coupled ODEs, a design parameter $\gamma>0$ appears: some rate of density is transferred out of the control volume $\lambda^N_\textOff$ and into the CV $\lambda^q_\textOn$ (as depicted in Figure~\ref{fig:cvLayout}) due to thermostatic control. The rate of the density transfer is then given as $-\gamma\nu_{\textOff}(\lambda^{N},t)$, where $\gamma>0$ is a modeling choice and a constant of appropriate units that describes the discharge rate. We then define $\alpha \triangleq D + \gamma$ where $D = \frac{\sigma^2}{(\Delta \lambda)^2}$. Recall that $\sigma^2$ is the variance in the Fokker-Planck equation~\eqref{eq:pdeOnMode}-\eqref{eq:pdeOffMode} and $\Delta \lambda$ is the temperature discretization interval. Thus, as long as $1/\Delta t > D$, a positive $\gamma$ can be chosen while meeting the condition $\alpha = 1/\Delta t$. The condition $1/\Delta t > D$ is equivalent to $\Delta t < (\Delta \lambda)^2/\sigma^2$.
\end{remark}

        \begin{remark}\label{rem:cond-indep}
The conditional independence factorization~\eqref{eq:condIndFactMat} has been a useful assumption in the design of algorithms in~\cite{busmey:CDC:2016}. In the present it is a byproduct of our spatial and temporal discretization of the PDEs~\eqref{eq:pdeOnMode}-\eqref{eq:pdeOffMode}. There are other works~\cite{BenenatiTractableCDC:2019,BashashModelingTCST:2013,PaccagnanRangeCDC:2015} that develop Markov models for TCLs through discretization of PDEs. However, to our knowledge, our work is the first to uncover this factorization.           
        \end{remark}
Lemma~\ref{lem:condFact} informs us how to define the dynamics of the marginals~\eqref{eq:margNu} under a different policy than the thermostat policy, which is described next.

\subsection{BA control command $=$ policy} \label{sec:contAggMod}
In light of the previous section, an arbitrary randomized policy can replace the thermostat policy to control the state process on $\stateSetZ$. That is equivalent to replacing $\Phi^\textTS$ in~\eqref{eq:condIndFactMat} with a new matrix $\Phi$ that corresponds to a policy designed for grid support. From the viewpoint of the BA this  randomized policy \emph{is} the control input that it must design and broadcast to a TCL. The TCL now implements this policy to make on/off decisions instead of using the thermostat policy. As we shall soon see, if the BA appropriately designs and sends the randomized policy to multiple TCLs it can achieve coordination of the TCLs for grid support.

To distinguish from thermostat policy $\phi^\text{TS}_{\textOff}$ and $\phi^\text{TS}_{\textOn}$ in the prior section that only maintains temperature, we denote the newly introduced policies for providing grid support with the superscript `$\textPolName$'. We require the policies, $\phi^{\textPolName}_{\textOn}$ and $\phi^{\textPolName}_{\textOff}$, to have the following structure
\begin{align} \label{eq:randPolOff2On}
&\phi^{\textPolName}_{\textOff}(\text{on} \ \vert \ j) = \begin{cases}
	\kappa^{\textOn}_j, & (m+1) \leq j \leq (N-1). \\
	1, & j = N. \\
	0, & \text{o.w.}
	\end{cases} \\ \label{eq:randPolOn2Off}
	&\phi^{\textPolName}_{\textOn}(\text{off} \ \vert \ j) = \begin{cases}
	\kappa^{\textOff}_j, & 2 \leq j \leq (q-1). \\
	1, & j = 1. \\
	0, & \text{o.w.}
	\end{cases}
\end{align}
with $\phi^{\textPolName}_{\textOff}(\text{off} \ \vert \ \cdot) = 1 - 	\phi^{\textPolName}_{\textOff}(\text{on} \ \vert \ \cdot)$ and $\phi^{\textPolName}_{\textOn}(\text{on} \ \vert \ \cdot) = 1 - 	\phi^{\textPolName}_{\textOn}(\text{off} \ \vert \ \cdot)$ and $\kappa^{\textOn}_j,\kappa^{\textOff}_j \in [0,1]$ for all $j$. The policies could also be time varying, for example: $\kappa^{\textOff}_j[k]$ and $\kappa^{\textOn}_j[k]$. The dependence of the policies on time is denoted as $\phi^{\textPolName}_{\textOff}[k]$ and $\phi^{\textPolName}_{\textOn}[k]$. Designing the grid support control policies is then equivalent to choosing the values of $\kappa^\textOn_j[k]$ and $\kappa^\textOff_j[k]$ for all $j$ and $k$.

We have required $\phi^{\textPolName}_{\textOff}(\text{on} \ \vert \ j) = 0$ for $1\leq j \leq m$ since the temperatures corresponding to these indices are below the permitted deadband temperature, $\lambda^{\text{min}}$. Hence, turning on at these temperature does not make physical sense. The arguments for the zero elements in $\phi^{\textPolName}_{\textOn}$ are symmetric.

\begin{remark} \label{rem:impRandPol}
From the individual TCL's perspective, implementing grid support randomized policies of the form~\eqref{eq:randPolOff2On}-\eqref{eq:randPolOn2Off} is straightforward: (i) the TCL measures its current temperature and on/off status, (ii) the TCL ``bins'' this temperature value according to~\eqref{eq:binnedState} and (iii) the TCL flips a coin to decide its next on/off state according to the probabilities given in~\eqref{eq:randPolOff2On}-\eqref{eq:randPolOn2Off}. Note that the thermostat policy is a special case of the grid support control policy, and both policies enforce the temperature constraint. 
\end{remark}

\section{Proposed framework} \label{sec:propFrame}
We are now in a position to present our unified framework for coordination of TCLs. We first expand the state of the model~\eqref{eq:discDynEsem} so to incorporate cycling, following~\cite{liushi:2016,TotuDemandCST:2017}. We then shift the viewpoint from a single TCL to that of a collection of TCLs (recall Remark~\ref{rem:LLN}) to develop our control oriented aggregate model. Using this model we develop a method for designing both reference and policy through convex optimization.  
\subsection{Individual TCL model with cycling} \label{sec:lockedDyn}
 We now augment the model for a TCL's temperature evolution with cycling dynamics. Recall the cycling constraint: as soon as a TCL switches its mode, the TCL becomes stuck in that mode for $\tau$ time instances. This constraint can be represented as the evolution of a state, specifically, a counter variable. First defining the binary variable $s_k$ as $s_k=1$ if the TCL is stuck in the current mode at time $k$ and $0$ if it is not stuck. The counter variable is defined as follows
\begin{align}
\Locked_{k+1} \triangleq \begin{cases}
\Locked_{k} + 1, & s_k = 1. \\
0, & s_k = 0.  
\end{cases}
\end{align}
This variable denotes the time spent in the ``stuck'' mode ($s_k = 1$). A TCL has flexibility to help the grid only when $\Locked_k = 0$, which means it is not stuck in either the on or off mode. If $\Locked_k >0$, it is stuck in either the on or off mode, and switching the mode to help the grid will violate the cycling constraint. 

Recall, the discrete state space $\stateSetZ$ for a TCL included binned temperature and on/off mode. The space $\stateSetZ$, the policies $\phi^\textPolName_\textOn$ and $\phi^\textPolName_\textOff$, the marginal pmf $\nu_k$, and the transition matrix $P_k$ (and consequently its factors $\Phi$ and $G_k$) now all need to be expanded to be defined over a state space consisting of $(I_k,m_k,\Locked_k)$. This expansion is described next.

We denote this newly expanded state space as the set of values: $\stateSet \triangleq $
\begin{align}
\Big\{m\in\{\textOn,\textOff\}, \ I \in \{1,\dots,N\}, \ \Locked \in \{0,\dots,\tau\}  \Big\},
\end{align}
with cardinality $|\stateSet| = 2N(\tau+1)$. The policies on the expanded state space are:
\begin{align} \label{eq:expPolStruct}
 \phi^{\text{E}}_{\textOff} &= \mathbf{I}_{\{ 0\}}(\Locked)\phi^{\textPolName}_{\textOff} + (1-\mathbf{I}_{\{ 0\}}(\Locked))\phi^{\text{TS}}_{\textOff}, \quad \text{and} \\ \nonumber \phi^{\text{E}}_{\textOn} &= \mathbf{I}_{\{ 0\}}(\Locked)\phi^{\textPolName}_{\textOn} + (1-\mathbf{I}_{\{ 0\}}(\Locked))\phi^{\text{TS}}_{\textOn}.
\end{align}
To ensure that expanded policy~\eqref{eq:expPolStruct} will enforce the cycling constraint, we impose the following restriction at the design stage: a TCL with $\Locked_k > 0$ will only implement the thermostat policy, and a TCL with $\Locked_k = 0$ will make on/off decisions based on the grid support policy. The construction in this way ensures a TCL will not violate its cycling and temperature constraints under the conditions in Assumption \textbf{A.2} and \textbf{A.3}.

Each entry of the expanded policy is denoted as $\phi^\text{E}_\textOff(u \ \vert \ j, \ l)$ and $\phi^\text{E}_\textOn(u \ \vert \ j, \ l)$. The expanded marginals are $\nu_{\textOff}[\lambda^j,l,k]$ and $\nu_{\textOn}[\lambda^j,l,k]$, and $\nu_{\textOff,l}$ (resp., $\nu_{\textOn,l}$) is shorthand for $\nu_{\textOff}[\cdot,l,k]$ (resp., $\nu_{\textOn}[\cdot,l,k]$). In vectorized form, the expanded marginal is $\nu^\text{E} = [\nu_{\textOff}^\text{E},\nu_{\textOn}^\text{E}]$ where $\nu_{\textOff}^\text{E} = [\nu_{\textOff,0}, \dots, \nu_{\textOff,\tau}]$ and $\nu_{\textOn}^\text{E} = [\nu_{\textOn,0}, \dots, \nu_{\textOn,\tau}]$. Define
	\begin{align} \label{eq:polMatE}
G^\text{E}_k &\triangleq \begin{bmatrix}
\mathbf{0} & D_\tau \otimes S_k^{\textOn} & \mathbf{0} & C_\tau \otimes P_k^{\textOff} \\
C_\tau \otimes P_k^{\textOn} & \mathbf{0} & D_\tau \otimes S_k^{\textOff} & \mathbf{0}
\end{bmatrix}^T,
\end{align}
where $D_\tau \triangleq \mathbb{1}^T \otimes \mathbf{e}_2\in\mathbb{R}^{\tau+1 \times \tau + 1}$ and  
\begin{align}
	C_\tau \triangleq \begin{bmatrix}
	1 & 0 & \mathbf{0}_{\tau-1}^T \\
	\mathbf{0}_{\tau-1} & \mathbf{0}_{\tau-1} & I_{\tau-1} \\
	1 &  0 & \mathbf{0}_{\tau-1}^T
	\end{bmatrix} \in \mathbb{R}^{(\tau+1) \times (\tau + 1)}.
\end{align}
We define the matrix $\Phi_k^\text{E}$ as having the same structure as~\eqref{eq:polMatInLem}, but with the expanded policies $\phi^{\text{E}}_{\textOff}$ and $\phi^{\text{E}}_{\textOn}$, i.e.,
\begin{align} \label{eq:fullBAcontPol}
\Phi^{\text{E}}_k \triangleq \begin{bmatrix}
I - \Phi^{\text{E}}_{\textOff}[k] & \Phi^{\text{E}}_{\textOff}[k] & \mathbf{0} & \mathbf{0} \\
\mathbf{0} & \mathbf{0} & \Phi^{\text{E}}_{\textOn}[k] & I - \Phi^{\text{E}}_{\textOn}[k]
\end{bmatrix},
\end{align} 
where $\Phi^{\text{E}}_{\textOff}[k] \triangleq \text{diag}(\phi^{\text{E}}_{\textOff}[k])$ and $\Phi^{\text{E}}_{\textOn}[k] \triangleq \text{diag}(\phi^{\text{E}}_{\textOn}[k])$. The model of a TCL with cycling dynamics and grid support policy becomes 
\begin{align} \label{eq:indProbModel}
	\nu^{\text{E}}_{k+1} = \nu^{\text{E}}_{k}\Phi_k^\text{E}G_k^\text{E}. 
\end{align}
The structure of the transition matrix $\Phi_k^\text{E}G_k^\text{E}$ is shown in Figure~\ref{fig:sparPatCycle}. For comparison, the transition matrix with policy $\phi^\textPolName$ and \emph{without} the cycle counter variable would simply be the four red shaded blocks appearing in their respective quadrant. In the expanded system, an on to off mode switch forces probability mass from the red shaded region ($l=0$ and $m=\textOn$) to the green shaded region ($l=1$ and $m=\textOff$). Mass must then transition through the chain of $\tau$ green blocks until it reaches the red block again, so to respect the cycling constraint.

\begin{figure}
	\centering
	\includegraphics[width=1\columnwidth]{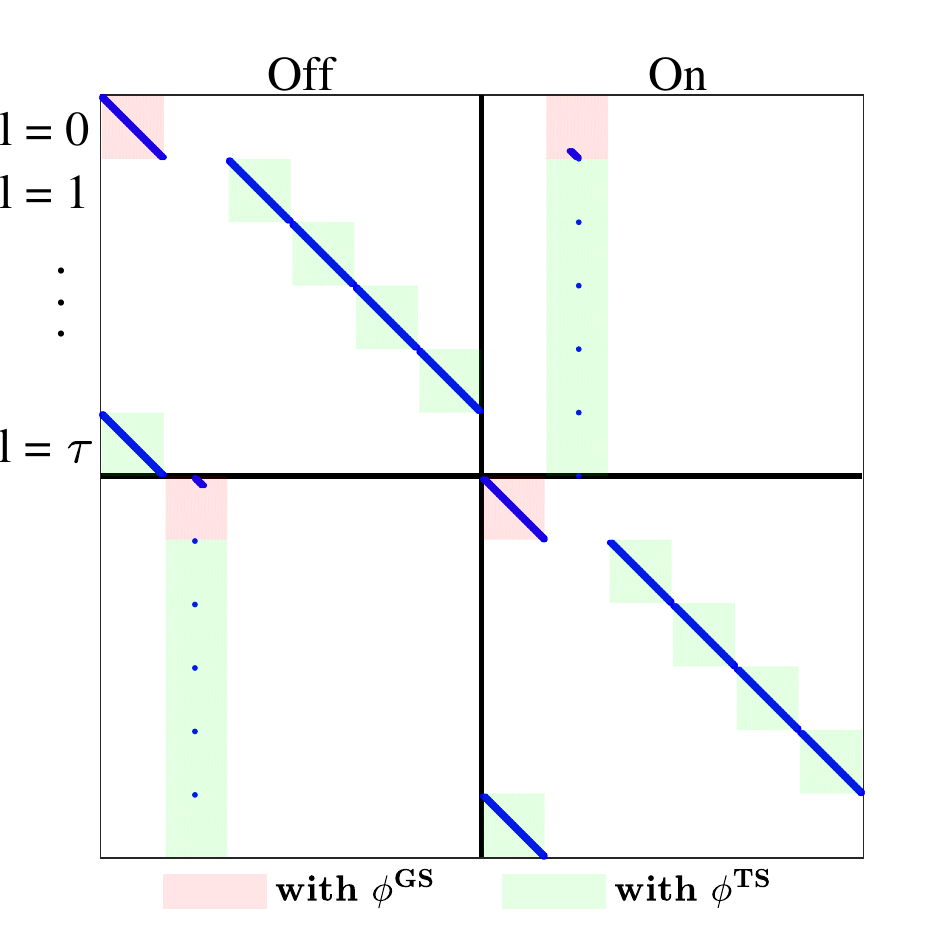}
	\caption{The sparsity pattern of the expanded transition matrix (the dots represent non-zero entries in the matrix) with $\tau = 5$. Each shaded block is over the entire range of temperature values.}
	\label{fig:sparPatCycle}
\end{figure}

\subsection{Aggregate model of a collection of TCLs} \label{sec:aggModel}
We now transition from the viewpoint of a single TCL to that of a collection of $\numTCLs$ TCLs: $\ell = 1,\dots,\numTCLs$. For example, $m_k^\ell$ and $I_k^\ell$ are the mode and binned temperature of the $\ell^{th}$ TCL at time $k$. Recall Remark~\ref{rem:LLN}, the model~\eqref{eq:indProbModel} also describes an entire collection of TCLs. For a single TCL, we view the state $\nu_k^\text{E}$ as a marginal but for a collection of TCLs we expect the marginal pmf $\nu_k^\text{E}$ to approximate the histogram
\begin{align} \label{eq:empHist}
 h_k[u,i,l] &\triangleq \frac{1}{\numTCLs}\sum_{\ell=1}^{\numTCLs}\Big( \mathbf{I}_{\{i\}}(I^{\ell}_k)\mathbf{I}_{\{u\}}(m^{\ell}_k)\mathbf{I}_{\{l\}}(\Locked^{\ell}_k)\Big), 
\end{align}
for each state $(u,i,l) \in \stateSet$ as $\numTCLs\rightarrow\infty$. In the same regard, we define 
\begin{align} \label{eq:outAggModel}
	\gamma_k^\text{E} \triangleq \nu_k^\text{E}C^\text{E}, \quad \text{where} \quad C^\text{E} \triangleq [\mathbf{0}^T,P_\textAgg\mathbb{1}^T]^T,
\end{align}
where $P_\textAgg$ is the maximum possible power of the collection, defined in~\eqref{eq:powerAgg}. We expect $\gamma_k^\text{E}$ to approximate the total power consumption $y_k$ of the collection of $\numTCLs$ TCLs:
\begin{align} \label{eq:totPow}
	y_k &\triangleq P\sum_{\ell=1}^{\numTCLs}m_k^\ell.
\end{align}
which is the discrete-time equivalent of $y(t)$ defined in~\eqref{eq:y-def}. That is, we expect $\gamma_k^\text{E} \approx y_k$ for large $\numTCLs$, based on a law of large numbers argument~\cite{ChenStateTAC:2017}. The \emph{control oriented aggregate model of a TCL collection} is the dynamics~\eqref{eq:indProbModel} together with the output~\eqref{eq:outAggModel}:
\begin{align} \label{eq:expAggModel}
	\nu^{\text{E}}_{k+1} = \nu^{\text{E}}_{k}\Phi_k^\text{E}G_k^\text{E}. \qquad \text{and} \qquad \gamma_k^\text{E} = \nu_k^\text{E}C^\text{E}. 
\end{align}

\ifshowArxivAlt
Effectiveness of~\eqref{eq:expAggModel} in modeling a population of TCLs can be seen in our prior work~\cite{CoffmanControlACC:2021}.
\fi

\ifshowArxiv
\subsubsection{Evaluating the aggregate model} \label{sec:evalAggModel}
Before proceeding to policy design with our developed model~\eqref{eq:expAggModel}, we first show that it is effective in modeling a population of TCLs. We do this by comparing the state of the model to~\eqref{eq:empHist} and~\eqref{eq:totPow} obtained from a simulation of \numTCLs = 50,000 air conditioning TCLs. 

The comparison results are shown in Figure~\ref{fig:histSimCompare} and Figure~\ref{fig:BAcontPol}. The mode state of each TCL evolves according to a control policy, where the $\phi^\textPolName_\textOff$ and $\phi^\textPolName_\textOn$ portion are shown in Figure~\ref{fig:BAcontPol} (bottom). The policy is arbitrary, designed merely to be an example of a non-thermostat policy. This policy satisfies the structure in~\eqref{eq:randPolOff2On} and~\eqref{eq:randPolOn2Off} so that both temperature and cycling constraints are satisfied at each TCL. The temperature evolution evolves according to~\eqref{eq:stoModelTCL}. We see the state $\nu^\text{E}_k$ matches the histogram $h_k$ of the collection for the devices that are not stuck (Figure~\ref{fig:histSimCompare} (top)) and for the devices that are stuck (Figure~\ref{fig:histSimCompare} (bottom)). Additionally, the output of the aggregate model, $\gamma_k^{\text{E}}$, matches it's empirical counterpart $y_k$ (shown in Figure~\ref{fig:BAcontPol} (top)).

\begin{figure}
	\centering
	\begin{minipage}{0.5\textwidth}
		\includegraphics[width = 1\columnwidth]{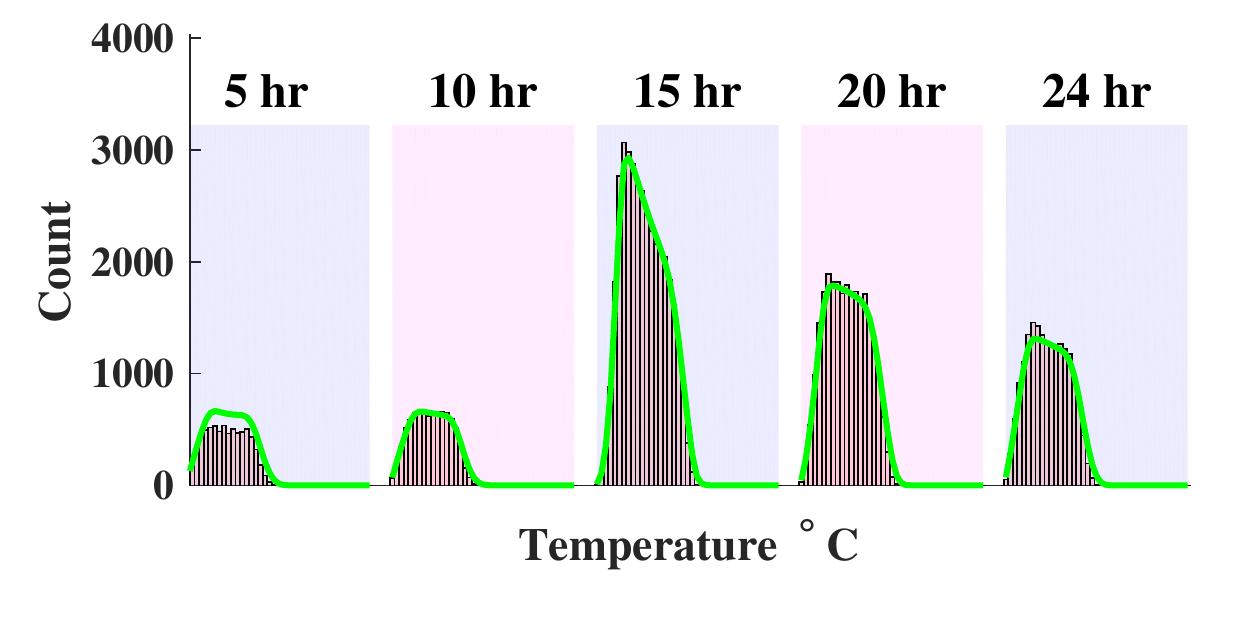}
	\end{minipage}
	\begin{minipage}{0.5\textwidth}
		\includegraphics[width = 1\columnwidth]{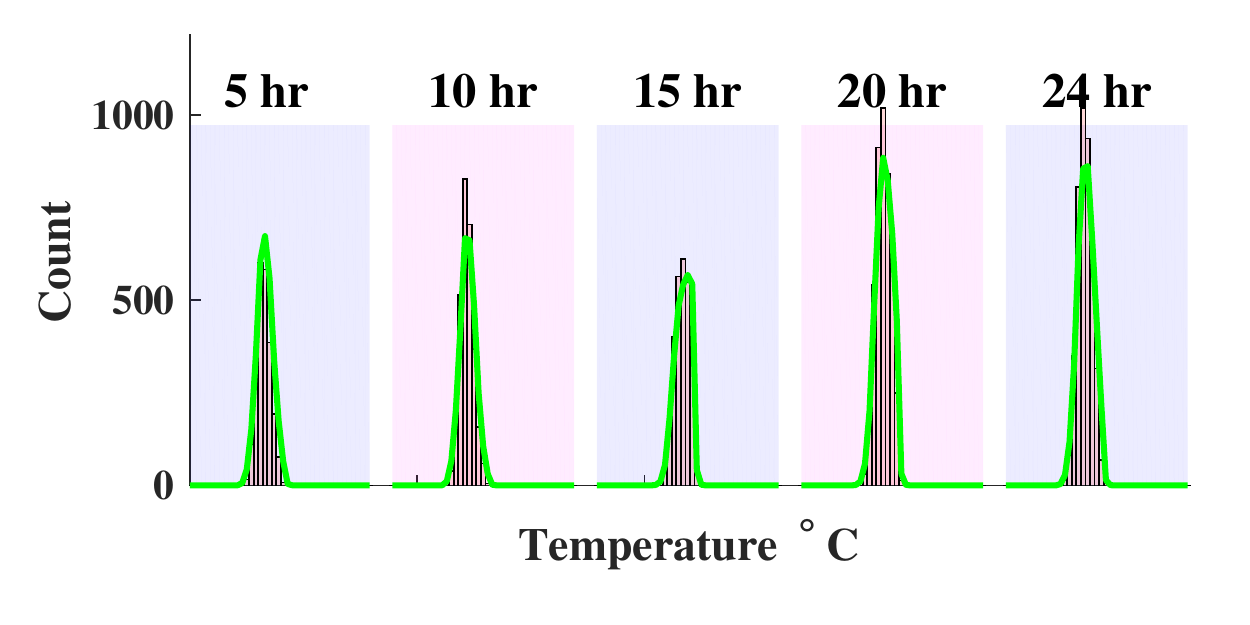}
	\end{minipage}
	\caption{(Top): Histogram of the collection for the devices that are on and not stuck. (Bottom): Histogram of the collection for the devices that are on and are stuck.}
	\label{fig:histSimCompare}
\end{figure}

\begin{figure}
	\centering
	\begin{minipage}{0.5\textwidth}
		\includegraphics[width = 1\columnwidth]{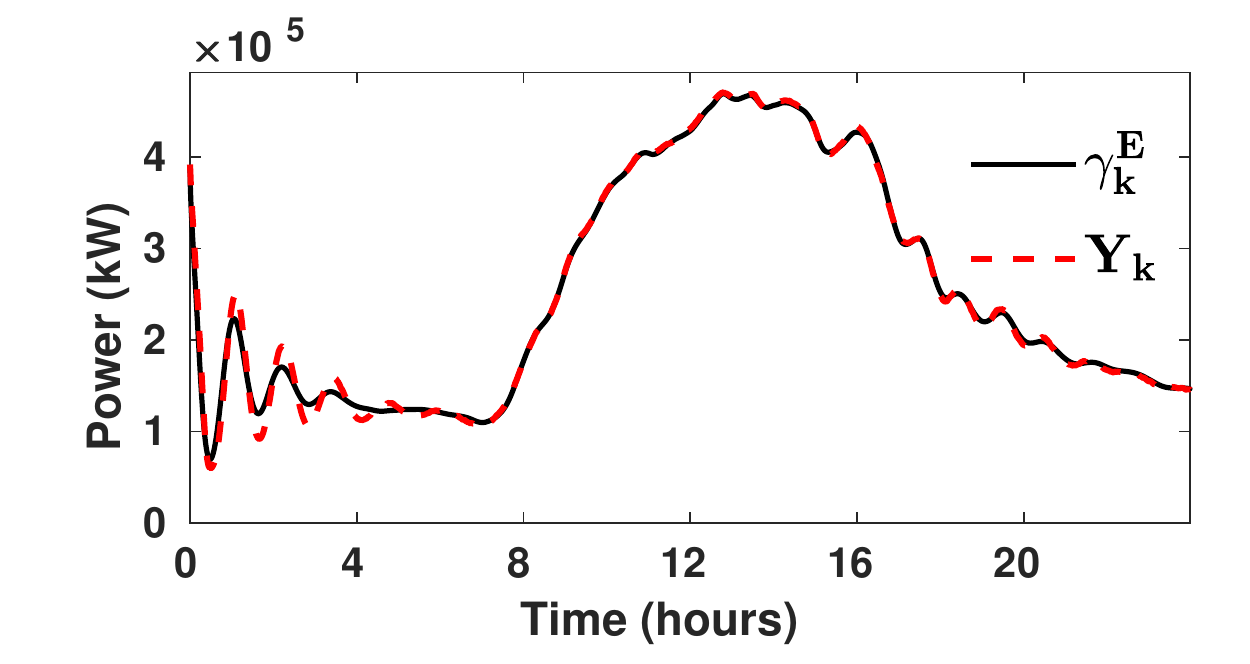}
	\end{minipage}
	\begin{minipage}{0.5\textwidth}
		\includegraphics[width = 1\columnwidth]{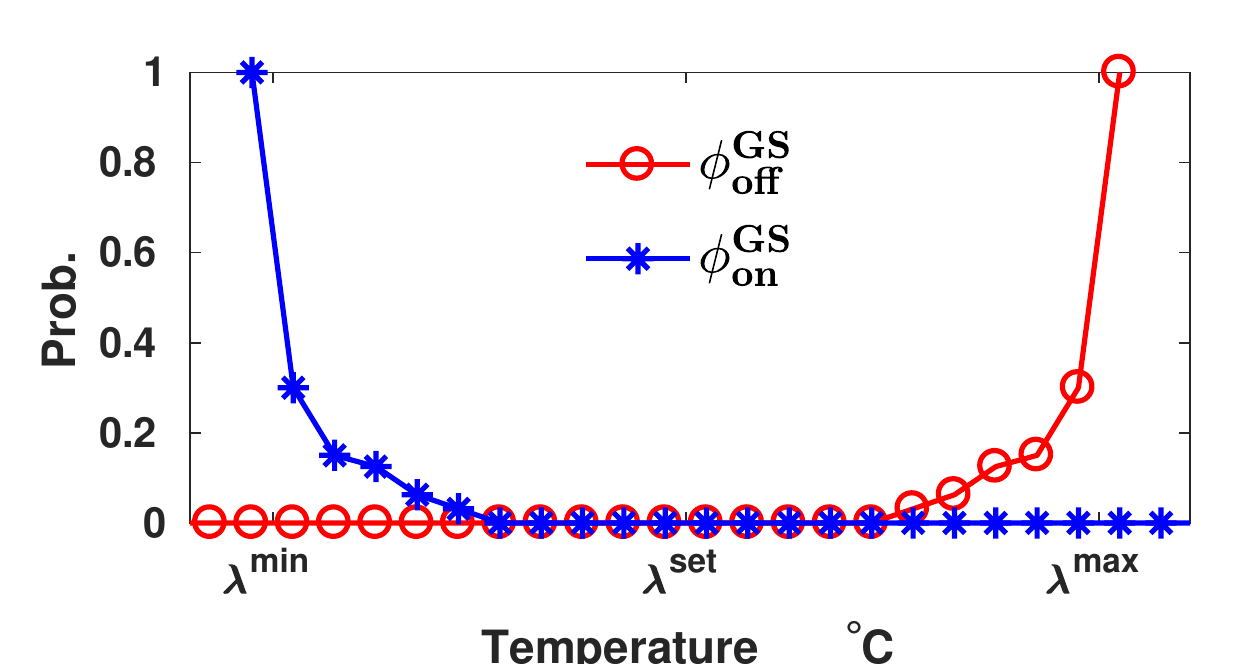}
	\end{minipage}	
	\caption{(Top): Comparison of the output of the expanded aggregate model $\gamma_k^\text{E}$ and the ensembles power consumption $y_k$. (Bottom): The policies $\phi^\textPolName_\textOff$ and $\phi^\textPolName_\textOn$ used for the numerical experiment in Section~\ref{sec:aggModel}.}
	\label{fig:BAcontPol}
\end{figure}
\fi

\subsection{Grid support Policy design} \label{sec:polDesign}

The goal of coordinating TCLs is to help the BA balance supply and demand of electricity in the grid. We denote $r^\textBA_k$ as the desired demand from all flexible loads and batteries that will reduce the imbalance to 0. It is unreasonable to expect any collection of TCLs to meet the entire desired demand $r^\textBA_k$ while maintaining their QoS. Only a portion of $r^\textBA_k$ can be supplied by TCLs, and we denote this portion by $r_k$. Determining $r_k$ becomes an optimal control problem due to the time coupling produced by the TCL dynamics. We consider a planning horizon of $T_{\text{plan}}$. To simultaneously design grid support control policies $\phi^{\textPolName}_\textOff[k]$ and $\phi^{\textPolName}_\textOn[k]$ and determine a suitable reference signal $r_k$ over $T_{\text{plan}}$ the BA solves the following optimization problem,
\begin{align} \label{eq:nonConvOpt}
	\eta^* = \min_{\nu^\text{E}_k,\Phi^{\text{E}}_k} \ &\eta(\hat{\nu}) = \sum_{k\in\timeHorz}\Big(r^\textBA_k - \gamma^\text{E}_k\Big)^2  \\
	\text{s.t.} \ &\nu^{\text{E}}_{k+1} = \nu^{\text{E}}_{k}\Phi_k^\text{E}G_k^\text{E}, \quad \nu^\text{E}_{\timeHorz(0)} = \hat{\nu}, \\
				&\gamma^\text{E}_k = \nu_k^\text{E}C^\text{E}, \quad \nu^{\text{E}}_k \in [0,1], \quad \Phi^{\text{E}}_k \in \varPhi.
\end{align}
The solution at time $k$ is denoted $r_k \triangleq \gamma_k^{\text{E},*}$, $\phi^{\textPolName,*}_\textOff[k]$, and $\phi^{\textPolName,*}_\textOn[k]$.
We have $\timeHorz \triangleq \{\timeHorz(0), \dots, \timeHorz(0) + T_{\text{plan}}-1\}$ is the index set of times, $\timeHorz(0)$ denotes the initial time index, $\hat{\nu}$ is the initial condition, and $\nu_k^\text{E}\in[0,1]$ holds elementwise. The set $\varPhi$ collects all of the constraints on the policy. This includes the equality constraints set by the structural requirements in~\eqref{eq:randPolOff2On}-\eqref{eq:randPolOn2Off} and~\eqref{eq:expPolStruct} as well as the structural requirement in~\eqref{eq:fullBAcontPol}. These constraints require certain elements of the policy to be either zero or one. The policy should also be a valid conditional pmf and its elements in $[0,1]$. Hence, the set $\varPhi$ is the following convex set
\begin{align} \nonumber
\varPhi \triangleq \Big\{\Phi \in \mathbb{R}^{|\stateSet|\times 2|\stateSet|}_{[0,1]}\ \big\vert \ &\Phi \ \text{satisfies}~\eqref{eq:fullBAcontPol}, \ \mathbb{1} = \Phi\mathbb{1},  \\ \nonumber
&\phi^{\textPolName}_{\textOff} \ \text{satisfies}~\eqref{eq:randPolOff2On}, \\ \nonumber
&\phi^{\textPolName}_{\textOn} \ \text{satisfies}~\eqref{eq:randPolOn2Off}, \ \text{and} \\
&\phi^{\text{E}}_{\textOff} \ \text{and} \ \phi^{\text{E}}_{\textOn} \ \text{satisfy}~\eqref{eq:expPolStruct}\Big\}.
\end{align}
Where, e.g., $\mathbb{R}^{|\stateSet|\times|\stateSet|}_{[0,1]}$ is the set of $|\stateSet|\times|\stateSet|$ matrices with elements in $[0,1]$. 
\ifx 0
\begin{align} \nonumber
	\varPhi \triangleq \Big\{&\Phi \in \mathbb{R}^{|\stateSet|\times 4|\stateSet|}_{[0,1]} \ \Big\vert  \ \Phi \ \text{satisfies}~\eqref{eq:fullBAcontPol}, \ \mathbb{1} = \Phi\mathbb{1}, \\ \nonumber &\phi^\text{E}_\textOff(u \ \vert \ j, \ l) = \beta_\textOff(u,j,l), \quad \forall \ (u,j,l) \in \mathcal{W}_\textOff,\\ \nonumber
	&\phi^\text{E}_\textOn(u \ \vert \ j, \ l) = \beta_\textOn(u,j,l), \quad \ \forall \ (u,j,l) \in \mathcal{W}_\textOn. \Big\}.
\end{align}
Where, e.g., $\mathbb{R}^{|\stateSet|\times|\stateSet|}_{[0,1]}$ is the set of $|\stateSet|\times|\stateSet|$ matrices with elements in $[0,1]$. 
\fi

\paragraph{QoS + Solution of~\eqref{eq:nonConvOpt} }
	\medskip

	\begin{enumerate}
		\item
		
		The equality constraints in $\Phi$ are present to ensure the individual TCL's QoS constraints: the structure~\eqref{eq:expPolStruct} ensures the cycling constraint and the structure~\eqref{eq:randPolOff2On}-\eqref{eq:randPolOn2Off} ensures the temperature constraint. Recall that this structure guarantees QoS by requiring the policy to place zero probability on state transitions that would violate QoS.
		\item A solution to~\eqref{eq:nonConvOpt} yields, for $k\in\timeHorz$, two things: (i) the optimal randomized policies $\phi^{\textPolName,*}_\textOff[k]$ and $\phi^{\textPolName,*}_\textOn[k]$ and (ii) an optimal reference for the power demand of the TCL collection $r_k (= \gamma^{\text{E},*}_k)$. The reference is optimal in the following sense: among all power demand signals the collection can track without requiring any TCL to violate its local QoS constraints in so doing, it is the closest to the BA's desired demand $r^\textBA$ in 2-norm.  The reference is also the predicted power consumption of the TCLs whilst using the policies $\phi^{\textPolName,*}_\textOff[k]$ and $\phi^{\textPolName,*}_\textOn[k]$.
	\end{enumerate}
 
\begin{remark} \label{rem:tclCap}
	Since the reference $r_k (= \gamma^{\text{E},*}_k)$ from~\eqref{eq:nonConvOpt} is the best the TCLs can do to help the BA without any TCL having to violate its QoS, Problem~\eqref{eq:nonConvOpt} therefore also provides an answer to the ``aggregate flexibility'' question: how much can a collection of TCLs vary their demand while maintaining their local QoS constraints. This question has been investigated by many works~\cite{PaccagnanRangeCDC:2015,CoffmanCharacterizingTPS:2020,coffmanFlexibilityTCL_ArXiV:2020,hao_aggregate:2015}.
\end{remark}


\subsubsection{Convex control synthesis}
The problem~\eqref{eq:nonConvOpt} is non-convex due to the product  $\nu^{\text{E}}_{k}\Phi_k^\text{E}$ in the constraint.
A well known convexification remedy for~\eqref{eq:nonConvOpt} is to consider optimizing over the marginal and joint distribution instead of the marginal and the policy~\cite{ManneLinearINFORMS:1960,BenenatiTractableCDC:2019}. Using our identified structure from Section~\ref{sec:condIndPk} we construct the following joint distribution (written in matrix form):
\begin{align} \label{eq:changVar}
	J_k = \text{diag}(\nu^{\text{E}}_k)\Phi^{\text{E}}_k \in \mathbb{R}^{|\stateSet| \times 2|\stateSet|}.
\end{align}
By construction, we have that $\nu^{\text{E}}_{k+1} = \mathbb{1}^TJ_kG^\text{E}_k$ and $(\nu^{\text{E}}_k)^T = J_k\mathbb{1}$ since $\mathbb{1}^T\text{diag}(\nu^{\text{E}}_k) = \nu^{\text{E}}_k$  and $\mathbb{1} = \Phi^{\text{E}}_k\mathbb{1}$. It is straightforward to convert the constraint set $\Phi^{\text{E}}_k \in \varPhi$ to the new decision variables. For the equality constraints in $\varPhi$ if we have that $\phi^\text{E}_\textOff(u \ \vert \ j,l) = \kappa$, then in the decision variables $J_k$ and $\nu_k^\text{E}$ we will have a linear constraint of the form
\begin{align} \nonumber
	&\Prob\left(m_{k+1} = u,\ I_{k} = j,\ \Locked_k = l,  \ m_k = \text{off}\right) \\ \label{eq:polCons} 
	&=\kappa\nu_{\textOff}[\lambda^j,l,k], 
\end{align}
where the LHS of the above is some element in the matrix $J_k$. In addition to the above equality constraints, requiring both $J_k$ and $\nu_k^\text{E}$ to be within $[0,1]$ and the constraint $(\nu^\text{E}_k)^T = J_k\mathbb{1}$ will allow one to reconstruct a policy $\Phi_k^\text{E}\in\varPhi$ from $J_k$ and $\nu_k^\text{E}$ (described shortly in Lemma~\ref{lem:polAlg}). We denote the transcription of $\Phi^{\text{E}}_k \in \varPhi$ to the new variables as $(J_k,\nu_k^\text{E}) \in \bar{\varPhi}$. 
Optimizing over $J_k$ and $\nu_k^\text{E}$ yields the convex program:
\begin{align} \label{eq:ConvOpt}
  \begin{split}
    \eta^* = &\min_{\nu^\text{E}_k,J_k} \ \eta(\hat{\nu}) = \sum_{k\in \timeHorz}\Big(r^\textBA_k - \gamma^\text{E}_k\Big)^2  \\
\text{s.t.} \quad &\nu^{\text{E}}_{k+1} = \mathbb{1}^TJ_kG^{\text{E}}_k, \quad \nu^\text{E}_{\timeHorz(0)} = \hat{\nu}, \quad \gamma^\text{E}_k = \nu_k^\text{E}C^\text{E}, \\ 
&\nu^{\text{E}}_k,J_k \in [0,1], \ (\nu^{\text{E}}_k)^T = J_k\mathbb{1}, \ (J_k,\nu_k^\text{E}) \in \bar{\varPhi}.
  \end{split}
\end{align}
Once the convex problem is solved, the grid support control policies need to be recovered from it by using the relation~\eqref{eq:changVar}. If the matrix $\text{diag}(\nu_k^\text{E})$ is invertible, then the policy can be  obtained trivially from inversion of $\text{diag}(\nu_k^\text{E})$. If $\text{diag}(\nu_k^\text{E})$ is not invertible, then slight care is required when reconstructing a policy from the solution of~\eqref{eq:ConvOpt}. We describe this in the following Lemma.

\begin{lemma} \label{lem:polAlg}
	Suppose for all $k\in \timeHorz$ that $\nu_k^\text{E}$ and $J_k$ satisfy the constraints in problem~\eqref{eq:ConvOpt}. Then, there exists matrices $H_k=H_k(\nu_k^\text{E})$ and $W_k=W_k(\nu_k^\text{E})$ so that for all $k\in \timeHorz$ the quantity $\Phi_k^\text{E} = H_kJ_k + W_k$  satisfies~\eqref{eq:changVar} and $\Phi_k^\text{E}\in \varPhi$.
\end{lemma}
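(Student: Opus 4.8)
The goal is to reconstruct a valid policy matrix $\Phi_k^{\text{E}}$ from the solution pair $(\nu_k^{\text{E}}, J_k)$ of the convex problem~\eqref{eq:ConvOpt}, handling the degenerate case where some entries of $\nu_k^{\text{E}}$ vanish so that $\text{diag}(\nu_k^{\text{E}})$ is singular. Working row by row, the relation~\eqref{eq:changVar} reads $[J_k]_{i,:} = \nu_k^{\text{E}}[i] \cdot [\Phi_k^{\text{E}}]_{i,:}$ for each state index $i$. When $\nu_k^{\text{E}}[i] > 0$, the $i$-th row of $\Phi_k^{\text{E}}$ is forced to be $[J_k]_{i,:}/\nu_k^{\text{E}}[i]$. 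When $\nu_k^{\text{E}}[i] = 0$, the constraint $(\nu_k^{\text{E}})^T = J_k\mathbb{1}$ forces the entire $i$-th row of $J_k$ to be zero, so~\eqref{eq:changVar} places \emph{no} constraint on that row of $\Phi_k^{\text{E}}$; we are free to fill it in with any valid conditional pmf consistent with the structural requirements in $\varPhi$ (the thermostat policy, or more precisely the appropriate deterministic/default choice of $\kappa$-values dictated by the row's physical interpretation).

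\textbf{Construction of $H_k$ and $W_k$.} I would define $H_k$ and $W_k$ explicitly as functions of $\nu_k^{\text{E}}$ alone. Let $\mathcal{S} = \{i : \nu_k^{\text{E}}[i] > 0\}$ be the support. Take $H_k = \text{diag}(h)$ where $h_i = 1/\nu_k^{\text{E}}[i]$ for $i \in \mathcal{S}$ and $h_i = 0$ otherwise — i.e., $H_k$ is the Moore--Penrose pseudoinverse of $\text{diag}(\nu_k^{\text{E}})$. Then take $W_k$ to be the matrix whose $i$-th row is zero for $i \in \mathcal{S}$ and equals the prescribed default policy row $[\Phi^{\text{default}}]_{i,:}$ (satisfying all structural constraints in $\varPhi$ for that state) for $i \notin \mathcal{S}$. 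The claim is then $\Phi_k^{\text{E}} = H_k J_k + W_k$. For rows in $\mathcal{S}$: $(H_k J_k + W_k)_{i,:} = (1/\nu_k^{\text{E}}[i])[J_k]_{i,:} + 0$, which is exactly the forced value and which, by the constraints $J_k \in [0,1]$, $(\nu_k^{\text{E}})^T = J_k\mathbb{1}$, has nonnegative entries summing to one — hence a valid conditional pmf; one then checks that the structural zero/one pattern of $\varPhi$ is inherited from the analogous pattern imposed on $J_k$ through $(J_k,\nu_k^{\text{E}}) \in \bar\varPhi$ (this is the content of the transcription~\eqref{eq:polCons}). For rows not in $\mathcal{S}$: the row equals $[\Phi^{\text{default}}]_{i,:}$ by construction, which lies in $\varPhi$ by design.

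\textbf{Verifying~\eqref{eq:changVar}.} Finally I would check $\text{diag}(\nu_k^{\text{E}})\Phi_k^{\text{E}} = J_k$ row by row. For $i \in \mathcal{S}$: $\nu_k^{\text{E}}[i] \cdot (1/\nu_k^{\text{E}}[i])[J_k]_{i,:} = [J_k]_{i,:}$. For $i \notin \mathcal{S}$: the left side is $0 \cdot [\Phi^{\text{default}}]_{i,:} = 0$, and the right side $[J_k]_{i,:} = 0$ because $(\nu_k^{\text{E}})^T = J_k\mathbb{1}$ together with $J_k \geq 0$ forces a zero row. This closes the argument.

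\textbf{Main obstacle.} The one genuinely delicate point is confirming that the reconstructed rows on the support actually lie in $\varPhi$ — that the structural equality constraints (certain entries forced to $0$ or $1$, and the $\kappa$-entries lying in $[0,1]$) survive the division by $\nu_k^{\text{E}}[i]$. This requires that the definition of $\bar\varPhi$ encode those same structural constraints on $J_k$ in the ``scaled'' form~\eqref{eq:polCons}, so that dividing through by $\nu_k^{\text{E}}[i] > 0$ returns them to the unscaled form in $\varPhi$; one must also make sure that when a prescribed entry of $\phi^{\text{E}}$ equals $1$ (not merely a free $\kappa \in [0,1]$), the corresponding entry of $J_k$ equals $\nu_k^{\text{E}}[i]$ exactly, which is again exactly what the transcription of the equality constraint records. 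Making this bookkeeping airtight — rather than any analytical difficulty — is the crux of the proof.
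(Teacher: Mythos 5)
Your proposal is correct and follows essentially the same route as the paper's proof: $H_k$ is the Moore--Penrose pseudoinverse of $\text{diag}(\nu_k^{\text{E}})$, $W_k$ fills the zero-mass rows with a default policy respecting the structural constraints (the paper concretely uses the prescribed $\beta$ values on constrained entries and $0.5$ on unconstrained ones), and membership in $\varPhi$ on the support is deduced from the transcribed constraints~\eqref{eq:polCons} together with $J_k,\nu_k^{\text{E}}\in[0,1]$ and $(\nu_k^{\text{E}})^T = J_k\mathbb{1}$. The bookkeeping point you flag as the crux is exactly what the paper's proof carries out.
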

\ifshowArxivAlt
\begin{proof}
	See Appendix~\ref{app:polConsProof}.
\end{proof}
\fi
\ifshowArxiv
\begin{proof}
	See Appendix~\ref{app:polConsProof}.
\end{proof}
\fi
Exact construction of $H_k$ and $W_k$ is given in the proof of Lemma~\ref{lem:polAlg}. \emph{Hence, the proof of Lemma~\ref{lem:polAlg} provides an algorithm for computing grid support control policies that are feasible for the problem~\eqref{eq:nonConvOpt} from the solutions of the convex problem~\eqref{eq:ConvOpt}.} Further the two problems have a certain equivalence described here in the following Theorem.

\begin{thmm}\label{thm:equivalance}
	Denote $\eta^*_{\text{CVX}}$ the optimal cost for~\eqref{eq:ConvOpt} and $\eta^*_{\text{NCVX}}$ the optimal cost for~\eqref{eq:nonConvOpt} we have that $\eta^*_{\text{CVX}} = \eta^*_{\text{NCVX}}$.  
\end{thmm}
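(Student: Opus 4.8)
The objective in both problems depends on the decision variables only through the trajectory $\gamma^{\text{E}}_k = \nu^{\text{E}}_k C^{\text{E}}$, so it suffices to establish a two-way correspondence between feasible points of~\eqref{eq:nonConvOpt} and~\eqref{eq:ConvOpt} that preserves the sequence $(\nu^{\text{E}}_k)_{k\in\timeHorz}$ (hence the cost). The plan is to prove both inequalities $\eta^*_{\text{CVX}} \le \eta^*_{\text{NCVX}}$ and $\eta^*_{\text{NCVX}} \le \eta^*_{\text{CVX}}$ by exhibiting, for each feasible point of one problem, a feasible point of the other with the same $\nu^{\text{E}}$-trajectory and thus the same value of $\eta$.

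For $\eta^*_{\text{CVX}} \le \eta^*_{\text{NCVX}}$: take any $(\nu^{\text{E}}_k,\Phi^{\text{E}}_k)_{k\in\timeHorz}$ feasible for~\eqref{eq:nonConvOpt} and set $J_k \triangleq \text{diag}(\nu^{\text{E}}_k)\Phi^{\text{E}}_k$ as in~\eqref{eq:changVar}. I would then verify, one constraint at a time, that $(\nu^{\text{E}}_k,J_k)$ is feasible for~\eqref{eq:ConvOpt}: the dynamics follow from $\mathbb{1}^T\text{diag}(\nu^{\text{E}}_k) = \nu^{\text{E}}_k$, which gives $\mathbb{1}^T J_k G^{\text{E}}_k = \nu^{\text{E}}_k \Phi^{\text{E}}_k G^{\text{E}}_k = \nu^{\text{E}}_{k+1}$; the marginal-consistency constraint $(\nu^{\text{E}}_k)^T = J_k\mathbb{1}$ follows from $\Phi^{\text{E}}_k\mathbb{1} = \mathbb{1}$; the box constraint $J_k\in[0,1]$ holds because each entry of $J_k$ is a product of an entry of $\nu^{\text{E}}_k\in[0,1]$ and an entry of $\Phi^{\text{E}}_k\in[0,1]$; the initial condition $\nu^{\text{E}}_{\timeHorz(0)}=\hat\nu$ is unchanged; and $(J_k,\nu^{\text{E}}_k)\in\bar\varPhi$ holds by the very definition of the transcription of $\Phi^{\text{E}}_k\in\varPhi$ (the linear equalities of the form~\eqref{eq:polCons}). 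Since $\gamma^{\text{E}}_k = \nu^{\text{E}}_k C^{\text{E}}$ is unchanged, $\eta(\hat\nu)$ is unchanged, giving the inequality.

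For $\eta^*_{\text{NCVX}} \le \eta^*_{\text{CVX}}$: take any $(\nu^{\text{E}}_k,J_k)_{k\in\timeHorz}$ feasible for~\eqref{eq:ConvOpt}. By Lemma~\ref{lem:polAlg} there exist $H_k=H_k(\nu^{\text{E}}_k)$, $W_k=W_k(\nu^{\text{E}}_k)$ such that $\Phi^{\text{E}}_k \triangleq H_k J_k + W_k$ satisfies $\text{diag}(\nu^{\text{E}}_k)\Phi^{\text{E}}_k = J_k$ and $\Phi^{\text{E}}_k\in\varPhi$. I would then check that $(\nu^{\text{E}}_k,\Phi^{\text{E}}_k)$ is feasible for~\eqref{eq:nonConvOpt}: using $\text{diag}(\nu^{\text{E}}_k)\Phi^{\text{E}}_k = J_k$ and $\mathbb{1}^T\text{diag}(\nu^{\text{E}}_k)=\nu^{\text{E}}_k$ we recover $\nu^{\text{E}}_k\Phi^{\text{E}}_k G^{\text{E}}_k = \mathbb{1}^T J_k G^{\text{E}}_k = \nu^{\text{E}}_{k+1}$; the box constraint $\nu^{\text{E}}_k\in[0,1]$ and the initial condition carry over; and $\Phi^{\text{E}}_k\in\varPhi$ is exactly the Lemma's conclusion. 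Again $\gamma^{\text{E}}_k$ and hence $\eta(\hat\nu)$ are unchanged, giving the second inequality. Combining the two proves $\eta^*_{\text{CVX}} = \eta^*_{\text{NCVX}}$.

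The routine direction is the first; the only genuine subtlety is the second, where one must exhibit a policy in $\varPhi$ realizing a given $(\nu^{\text{E}}_k,J_k)$ even when $\text{diag}(\nu^{\text{E}}_k)$ is singular, i.e., when some state has zero marginal probability. That difficulty is entirely isolated in Lemma~\ref{lem:polAlg} (whose proof constructs $H_k,W_k$ explicitly), so at the level of this theorem it is just a matter of invoking that lemma and checking the dynamics and box constraints transfer correctly. One should also remark that because the convex relaxation~\eqref{eq:ConvOpt} is obtained by an exact change of variables rather than a genuine relaxation, this equivalence is to be expected; the content of the theorem is precisely that no spurious feasible points are introduced and none are lost.
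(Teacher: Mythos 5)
Your proposal is correct and follows essentially the same route as the paper's own proof: both directions are established by the same feasibility-preserving correspondences, with $J_k=\mathrm{diag}(\nu^{\text{E}}_k)\Phi^{\text{E}}_k$ for one inequality and Lemma~\ref{lem:polAlg} supplying $\Phi^{\text{E}}_k=H_kJ_k+W_k$ for the other. The only cosmetic difference is that the paper explicitly re-verifies $\nu^{\text{E}}_k(H_kJ_k+W_k)=\mathbb{1}^TJ_k$ via the structure of $H_k$ and $W_k$, whereas you obtain the same identity directly from the lemma's conclusion that $\mathrm{diag}(\nu^{\text{E}}_k)\Phi^{\text{E}}_k=J_k$.
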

\ifshowArxivAlt
\begin{proof}
	See Appendix in~\cite{CoffmanUnifiedArxiv:2021}.
\end{proof}
\fi
\ifshowArxiv
\begin{proof}
	See Appendix~\ref{app:equivOptProof}.
\end{proof}
\fi 
This result, for a similar problem setup, is also reported in~\cite{BenenatiTractableCDC:2019}.
While we have no guarantee on the difference of the argument minimizers (and hence the policies obtained from both), Theorem~\eqref{thm:equivalance} says that the policies will produce the same tracking performance. Further, from Lemma~\ref{lem:polAlg}, the policies produced from either problem are guaranteed to ensure TCL QoS.

\subsubsection{Computational considerations}
The dimension of the program~\eqref{eq:ConvOpt} can be quite large, so that even though it is convex obtaining a solution requires care. We discuss now some practical considerations that we found necessary to consider when solving the problem~\eqref{eq:ConvOpt}. 

Due to the structure of $\Phi^{\text{E}}_k$, we do not need to declare every element in the matrix $J_k$ as a decision variable since many of these elements will be zero. For instance, we see that $\text{diag}(\nu^\text{E}_k)\Phi^\text{E}_k$ is a block matrix, where further each matrix block is diagonal. We express this as: $\text{diag}(\nu^\text{E}_k)\Phi^\text{E}_k =$
\begin{align} \label{eq:blockPolMat}
&\begin{bmatrix} \nonumber
B_{\textOff,\textOff}[k] & B_{\textOff,\textOn}[k] & \mathbf{0} & \mathbf{0} \\
\mathbf{0} & \mathbf{0} & B_{\textOn,\textOff}[k] & B_{\textOn,\textOn}[k]
\end{bmatrix}	\\ &\delequalRHS \ \text{sparse}(J_k),
\end{align}
where, e.g., $B_{\textOff,\textOff}[k] = \text{diag}(\nu^\text{E}_{\textOff}[k])(I - \Phi^\text{E}_{\textOff}[k])$. The other diagonal matrices appearing in~\eqref{eq:blockPolMat} can be inferred by carrying out the matrix multiplication. 

If $J_k$ was declared directly as a decision variable the problem~\eqref{eq:ConvOpt} would have $(8N^2+2N(\tau+1)) T_{\text{plan}}$ primal variables, whereas the problem with $\text{sparse}(J_k)$ as a decision variable only has $2N T_{\text{plan}}(\tau+3)$ primal variables. As an example, consider $N=12$, $T_{\text{plan}} = 360$, and $\tau = 5$, which are values used in numerical results reported later. The problem~\eqref{eq:ConvOpt} without the structure exploited has $\approx 0.5$ million decision variables, but only $\approx 75,000$ when the structure is exploited.

We also have found it helpful to include constraints of the form,
\begin{align} \label{eq:constAddOne}
	&\phi^\textPolName_{\textOff}(\text{on} \ \vert \ j-1)\nu_\textOff[\lambda^{j-1},0,k] \leq \phi^\textPolName_{\textOff}(\text{on} \ \vert \ j)\nu_\textOff[\lambda^{j},0,k], \\ \label{eq:constAddTwo}
	&\phi^\textPolName_{\textOn}(\text{off} \ \vert \ j+1)\nu_\textOn[\lambda^{j+1},0,k] \leq \phi^\textPolName_{\textOn}(\text{off} \ \vert \ j)\nu_\textOn[\lambda^j,0,k],
\end{align}
so to suggest that the switching on (resp., switching off) probability increases as temperature increases (resp., decreases). Adding the constraints~\eqref{eq:constAddOne}-\eqref{eq:constAddTwo} to the problem~\eqref{eq:ConvOpt} is straightforward as both the LHS and RHS of the inequalities are elements in the matrix $J_k$.

Matlab implementation of~\eqref{eq:ConvOpt} and the algorithm to extract the policies from $J_k$ (described in the proof of Lemma~\ref{lem:polAlg}) is available at~\cite{CoffmanUnifiedCode:2021}.

\subsubsection{Communication burden} \label{sec:comBurd}
Once solved, the policies obtained from~\eqref{eq:ConvOpt} need to be sent to each individual TCL. Many of the policy state values are constrained to either zero or one, which could be pre-programmed into each TCL. At each time index, $q-2$ (for the on to off policy) plus $N-m-1$ (for the off to on policy) numbers are not constrained and need to be sent from the BA to each TCL. Recall that the numbers $m$ and $q$ are temperature bin indices (see Figure~\ref{fig:cvLayout}) and $N$ is the number of temperature bins. For illustrative purposes, consider the values used in numerical experiments reported in the sequel: $N=12$ with $q = 10$ and $m = 2$ and a time discretization $\Delta t = 1$  minute. Since $N=q+m$, then the BA has to broadcast $2(q-1) = 18$  numbers every 1 minute to the TCLs. Each TCL receives the same 18 numbers.

Communication from TCLs to the BA - about their temperature and on/off state - is needed at the beginning of every planning period so that the BA can determine the initial condition $\hat{\nu}$ in~\eqref{eq:ConvOpt}. The frequency of this feedback is a design choice. In our numerical simulations reported later, a planning horizon of 6 hours was used, and this feedback was necessary only once in six hours.More frequent loop closure may be needed for higher robustness to uncertainty in weather prediction etc., a topic outside the scope of this paper.

\section{Numerical experiments}\label{sec:numExp}
Simulation involving coordination of $\numTCLs = 20,000$ TCLs through our proposed framework is presented here. Recall the two parts of the coordination architecture shown in Figure~\ref{fig:controlArch}: (i) planning and (ii) real time control. Planning refers to the solution of the problem~\eqref{eq:ConvOpt} at the BA to compute the following two things for the planning period $\timeHorz$:
\begin{enumerate}
	\item $r_k$: the reference power consumption of the TCL collection, given the problem data $r^\textBA_k$. 
	\item $\phi^{\textPolName,*}_\textOff[k]$ and $\phi^{\textPolName,*}_\textOn[k]$: grid support control policies for each TCL.
        \end{enumerate}
        This computation is performed at $\timeHorz(0)$.  Real time control is then the implementation of the grid support policies by each TCL to make on/off decisions in real time. We imagine the BA broadcasts the policies $\phi^{\textPolName,*}_\textOff[k]$ and $\phi^{\textPolName,*}_\textOn[k]$ at each $k$, though it can also broadcast all the policies, for all $k \in \timeHorz$, at $\timeHorz(0)$ and not broadcast again until the beginning of the next planning horizon. 

        The goal of the numerical simulations of real time control is to show the following.
\begin{enumerate}
	\item When each TCL uses the policies $\phi^{\textPolName,*}_\textOff[k]$ and $\phi^{\textPolName,*}_\textOn[k]$ to decide on/off actuation, the collection's power demand indeed tracks $r_k$.
	\item Every TCL's QoS constraints - both temperature and cycling - are satisfied at all times.
\end{enumerate}
Temperature of each TCL is computed in these simulations with the ODE model~\eqref{eq:detModelTCL}. 

\begin{table}
	\centering
	\caption{Simulation Parameters}
	\setlength{\arrayrulewidth}{0.03cm}
	\begin{tabular}{ l |c| c | l |c| c}
          \hline 
		Par. & Unit & value & Par. & Unit & value \\ \hline
		\numTCLs & N/A & 2$\times 10^4$ & $\eta$ & $\frac{\text{kW-e}}{\text{kW-th.}}$ & $2.5$ \\\hline
		$C$ & kWh$/^{\circ}$C & 1 & $P_0$ & kW & 5.5  \\\hline
		$\lambda^{\textLB}$ & $^{\circ}$C & 20 & $\lambda^{\textUB}$ & $^{\circ}$C & $22$\\\hline
		$(\Delta t)\tau$ & Mins. & 5 & $P_{\textAgg}$ & MW & 110\\ \hline
		$R$ & $^{\circ}$C$/$kW & 2 & $\Delta t$ & Mins. & 1\\ \hline
		$q$ & N/A & 10 & $m$ & N/A & 2\\ \hline
		$N$ & N/A & 12 & $T_{\text{plan}}$ & N/A & 360 \\ \hline
	\end{tabular}
\label{tab:BP}
\end{table}

\subsection{Planning}
The demand needed for demand-supply imbalance at the BA, $r^\textBA_k$, is chosen arbitrarily, and shown in Figure~\ref{fig:optProbRes} (top). It is infeasible for the collection: sometimes negative and sometimes far higher than the maximum power demand of the collection. This is done to simulate a realistic scenario in which many sources of demand and generation, not just TCLs, are managed by the BA.  

The baseline demand trajectory is defined by the equation~\eqref{eq:powerAgg}, which is approximately the power consumption for this collection of air conditioners under thermostat control. The ambient air temperature is time varying and is obtained from  \url{wunderground.com} for a typical summer day in Gainesville, Florida, USA. The other parameters that affect the Markov model are shown in Table~\ref{tab:BP}.

Planning computations are done with Matlab and CVX~\cite{cvx} using a desktop Linux machine, with $N=12$, and for a six hour planning horizon with 1 minute discretization ($T_{\text{plan}} = 360$). The problem~\eqref{eq:ConvOpt} takes about a minute to solve. The quantity $r^\textBA_k$, the baseline power $\bar{P}_k$, and the reference signal $r_k$, obtained from solving~\eqref{eq:ConvOpt}, are shown in Figure~\ref{fig:optProbRes} (top).  The optimal reference for the collection,  $r_k$, is as close to $r^\textBA_k$ as the dynamics of TCLs allows without violating their QoS constraints; recall Remark~\ref{rem:tclCap}. Figure~\ref{fig:optProbRes} (bottom) shows the two grid support control polices for one time instant. 

\subsection{Real time control}
The power consumption of the collection making on/off decisions according to the obtained policies is shown in Figure~\ref{fig:refTrackCent} (top). The figure shows that the TCLs are able to collectively track the reference signal $r_k$. We emphasize that the computational effort at each TCL is negligible. Recall Remark~\ref{rem:impRandPol}: once a TCL receives a grid support policy ($\approx$ 18 floating point numbers, see Section~\ref{sec:comBurd}) it only has to measure its current state (temperature and on/off mode) and generate a uniformly distributed random number in $[0, 1]$ to implement the policy. 

Verification of the grid support policies in ensuring QoS is shown in Figure~\ref{fig:refTrackCent}. The bottom plots shows a histogram of the times between switches for 300 randomly chosen TCLs. The middle plot shows a histogram of temperature from 200 randomly chosen TCLs' temperature trajectories. The histograms show that the policies designed with~\eqref{eq:ConvOpt} indeed satisfy the QoS constraints, which is specified by the vertical lines in the figures. Some TCLs do escape the temperature deadband by a little bit, which is expected and occurs also in thermostatic control: the sensor must \emph{first} register a value outside the deadband in order decide to switch the on/off state.

\begin{figure}
	\centering
	\begin{minipage}{0.5\textwidth}
		\includegraphics[width = 1\columnwidth]{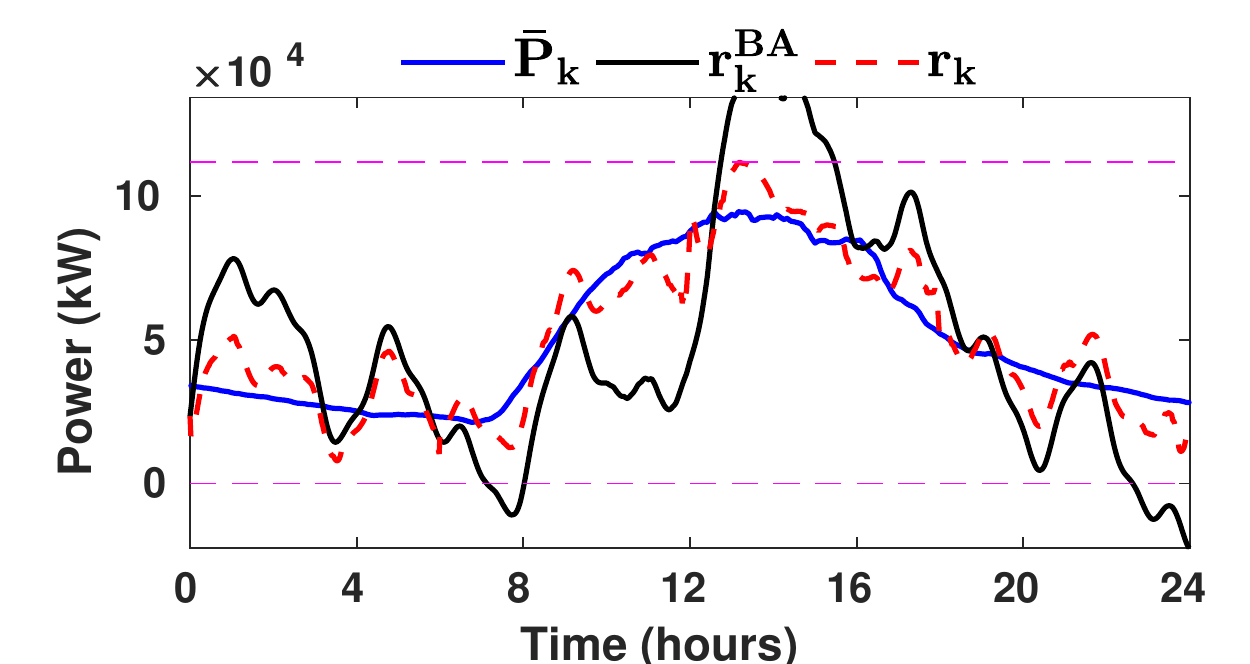}
	\end{minipage}
	\begin{minipage}{0.5\textwidth}
		\includegraphics[width = 1\columnwidth]{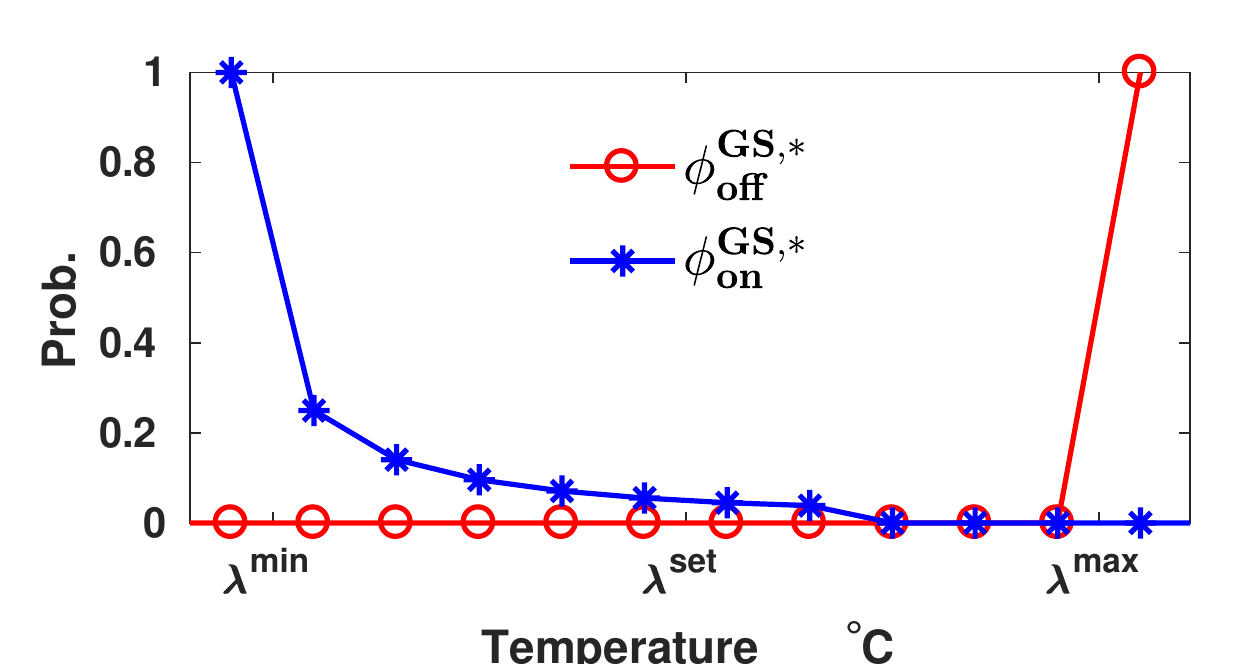}
	\end{minipage}
	\caption{(Top): The quantity $r_k$ obtained from solving~\eqref{eq:ConvOpt}, the dashed horizontal lines represent all of the TCLs on (top line) and off (bottom line). (Bottom): Grid support control policies, obtained from solving~\eqref{eq:ConvOpt}, at one time instance.}
	\label{fig:optProbRes}
\end{figure}

\begin{figure}[h]
	\centering
	\begin{minipage}{0.5\textwidth}
		\includegraphics[width = 1\columnwidth]{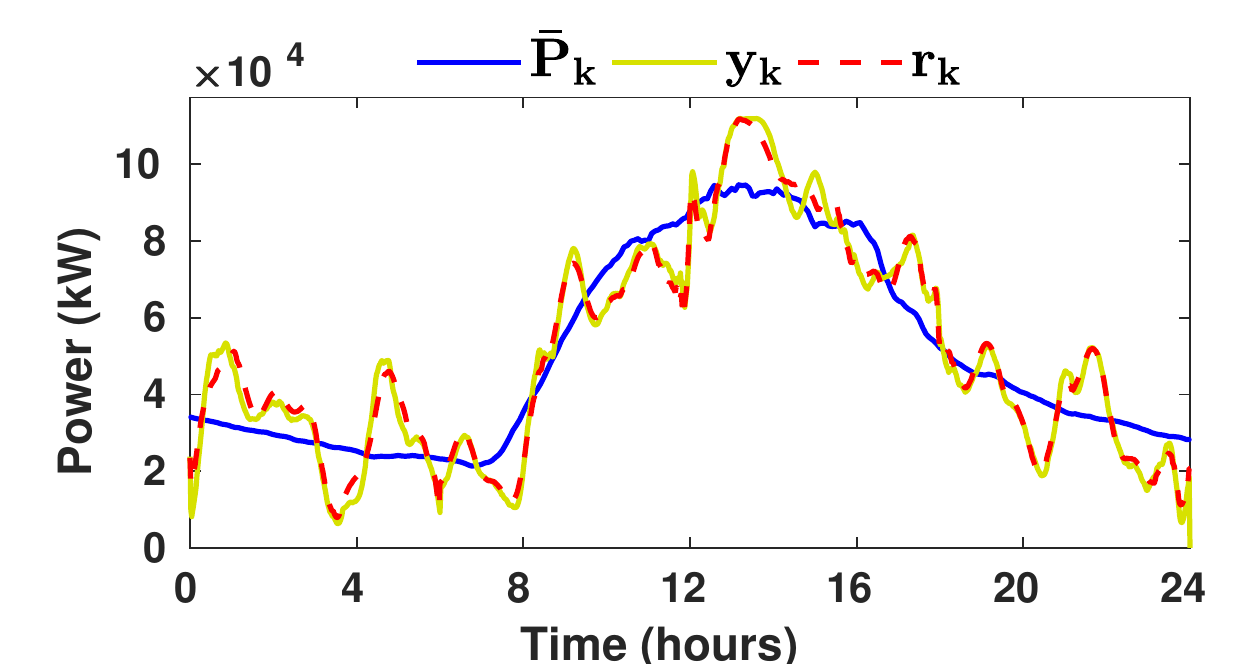}
	\end{minipage}
	\begin{minipage}{0.5\textwidth}
		\includegraphics[width = 1\columnwidth]{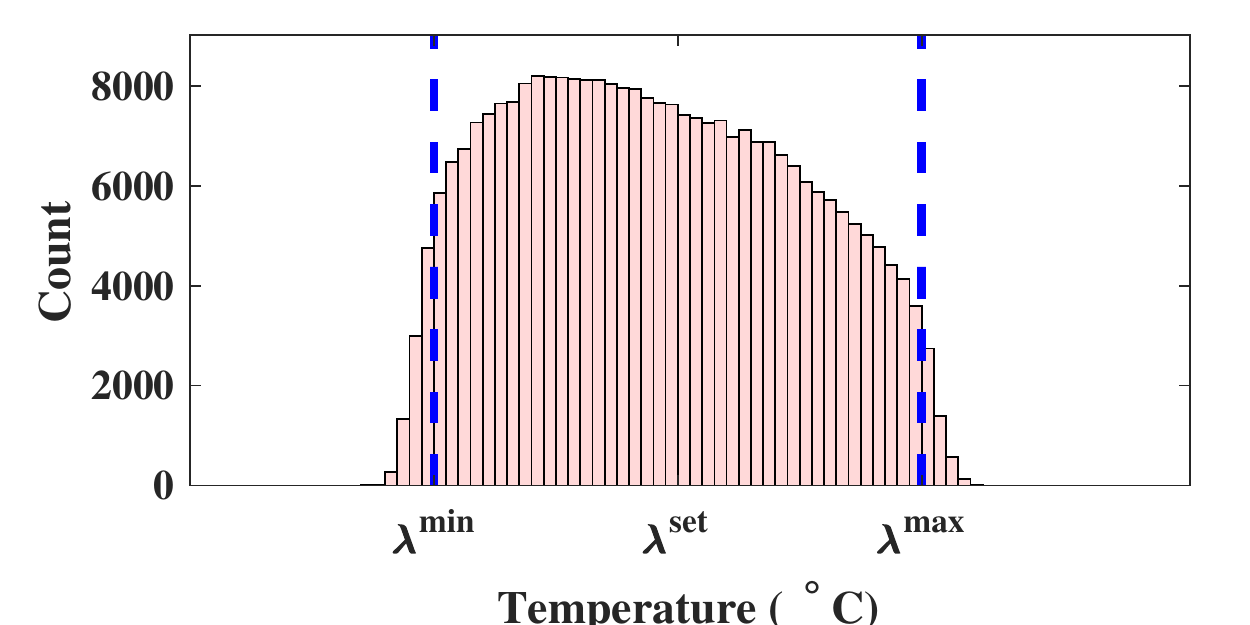}
	\end{minipage}
	\begin{minipage}{0.5\textwidth}
		\includegraphics[width = 1\columnwidth]{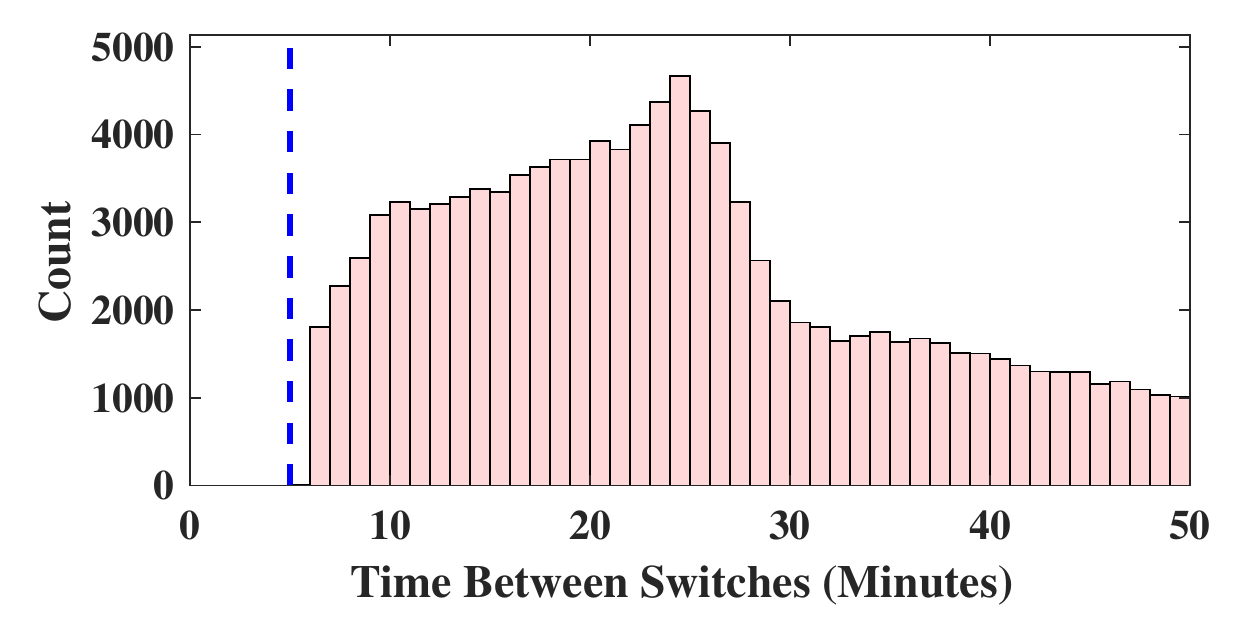}
	\end{minipage}
	\caption{(Top): Reference tracking results for the TCLs under the influence of the grid support control policies obtained by solving~\eqref{eq:ConvOpt}. (Middle): Histogram of the 200 TCL's temperature trajectories over the entire simulation horizon. (Bottom): Histogram of the time between switches over 3000 TCLs with the vertical line representing the minimum allowable time between switches.}
	\label{fig:refTrackCent}
\end{figure}

\ifx 0
Figure~\ref{fig:histTrackCent} shows the state of the TCL population predicted by the aggregate model 24 hours in advance, along with the histogram of the population obtained from simulating 20,000 TCLs. Each TCL in the simulation makes on/off decisions using the grid support policies mentioned above but the feedback from the TCLs to the BA is completely turned off. Despite this open loop nature, the BA is able to predict the histogram of the ensemble one day in advance. 

\begin{figure}
	\centering
	\begin{minipage}{0.5\textwidth}
		\includegraphics[width = 1\columnwidth]{histTrackingCentralized_withCycleState_notLocked_optPol.pdf}
	\end{minipage}
	\begin{minipage}{0.5\textwidth}
		\includegraphics[width = 1\columnwidth]{histTrackingCentralized_withCycleState_Locked_optPol.pdf}
	\end{minipage}
	\caption{(Top): Histogram of the ensemble for the devices that are on, not stuck, and under the influence of the grid support control policies obtained by solving~\eqref{eq:ConvOpt}. (Bottom): Histogram of the ensemble for the devices that are on, stuck, and under the influence of the grid support control policies obtained by solving~\eqref{eq:ConvOpt}. Green line is the model prediction, while the histograms are obtained from simulation data with real-time control with 20,000 air conditioners. }
	\label{fig:histTrackCent}
\end{figure}

\fi
\section{Conclusion} \label{sec:conc}

In this work we present a unified framework for the distributed control of TCLs. The framework enables: (i) reference planning for a collection of TCLs and (ii) design of a randomized control policy for the individual TCLs, so that both the BA's requirement and consumers' QoS are satisfied. The resulting framework is (i) scalable to an arbitrary number of loads and is implemented through \emph{local} feedback and minimal communication, (ii) able to guarantee both temperature and cycling constraints maintenance in each TCL, and (iii) based on convex optimization. Matlab/cvx implementation is publicly available~\cite{CoffmanUnifiedCode:2021}.

There are several avenues for future work. The optimal control problem is solved in an open-loop fashion here. Feedback from TCLs is used only to compute an initial condition that is needed as problem data for the off- line planning problem. It is straightforward to close the loop between the TCL collection and the BA with greater frequency for robustness to uncertainty in weather forecast and TCL parameters. It will be of interest to identify scenarios where closing loop, say, by using Model Predictive Control, is (i) necessary, and (ii) at what frequency should information be communicated from the TCLs to the BA. Another avenue is to investigate how the problem~\eqref{eq:ConvOpt} could be solved at each TCL, intermittently, instead of at the BA. Since the computational power of the processor at each TCL is lower than that of the processor at the BA, online distributed algorithms for convex optimization could play a role. The Fokker-Planck equations from~\cite{MalhameElectricTAC:1985} we used here are convenient for modeling TCL populations with a small deegree of heterogeneity. Distributed computation of optimal policies locally at each TCLs may help extend the method to a highly heterogeneous population of TCLs.

\ifshowArxivAlt
\bibliographystyle{plain}
\bibliography{\DiCEbibPATH/Barooah,\DiCEbibPATH/optimization,\DiCEbibPATH/grid,\DiCEbibPATH/ControlTheory,\DiCEbibPATH/distributed_control,\DiCEbibPATH/basics}
\fi

\ifshowArxiv

\fi

\appendix
\section{Proofs} \label{app:proof}
\renewcommand{\theequation}{\thesection.\arabic{equation}}

\ifshowArxiv
\subsection{Proof of Lemma 1} ~\label{app:rateMatProof}
See Appendix~\ref{app:pdeDisc} before reading this proof. Property (ii) is a consequence of the upwind difference scheme used. We see that for the internal CVs we have
\begin{align}
\text{off CVs:}& \quad -\Big(F^{i,+}_\textOff + D\Big) \\
\text{on CVs:}& \quad \Big(F^{i,-}_\textOn - D\Big). 
\end{align}
From Assumption \textbf{A.2} we have that $F^{i,-}_\textOn \leq 0$ and $F^{i,-}_\textOff \geq 0$ so that both of the above terms are negative. The upwind scheme is what ensured appropriate sign was added to the terms $F^{i,-}_\textOn$ and $F^{i,-}_\textOff$ so that the above coefficients are negative. Similar arguments can be applied for the off diagonal terms of the internal CVs and the boundary CVs.

To show property (i) we consider solely an internal CV for the off state as the arguments for all other CVs are identical in structure. Note that showing $A(t)\mathbb{1} = \mathbf{0}$ is equivalent to $\mathbb{1}^T\mathcal{A}(t) = \mathbf{0}^T$. Hence we need to show, for an arbitrary $i$ that all coefficients acting on $\nu_{\textOff}(\lambda^i,t)$ sum to $0$. We collect the coefficients corresponding to $\nu_{\textOff}(\lambda^i,t)$:
\begin{align} \nonumber
\text{From CV($i$)}:& \quad -F^{i,+}_\textOff(t) - D. \\ \nonumber
\text{From CV($i-1$)}:& \quad \frac{D}{2}.\quad \\ \nonumber
\text{From CV($i+1$)}:& \quad \frac{D}{2} + F^{i+1,-}_\textOff(t).\quad 
\end{align}
We then require the sum of these coefficients to be zero for all $t$ and any index $i$ for the internal off CVs, adding yields
\begin{align} \nonumber
	F^{i+1,-}_\textOff(t) - F^{i,+}_\textOff(t) = \frac{f_\textOff(\lambda^{i+1,-},t) - f_\textOff(\lambda^{i,+},t)}{\Delta\lambda} = 0
\end{align}
since by construction $\lambda^{i+1,-} = \lambda^{i,+}$ for the off CV's. This procedure can be repeated for $\nu_{\textOff}(\lambda^i,t)$ with $i \in \{1,m,N\}$, i.e., the boundary CVs in the off state and all of the on CVs in a similar fashion.
\fi

\ifshowArxiv
\subsection{Proof of Lemma 3} ~\label{app:condFactProof}
	If $\alpha = (\Delta t)^{-1}$, the diagonal elements of $A_k$ with $\alpha$ in them will go to zero and the non diagonal elements will go to 1. These non-diagonal elements with value $1$ are the red dots in Figure~\ref{fig:sparPattern} and encapsulate the thermostat control law. Thus the construction of $\Phi^\textTS$ with the canonical basis vectors. Now, multiplying out the matrix we have,
\begin{align}
	\Phi^\textTS G_k = \begin{bmatrix}
		\big(I-\Phi^\textTS_{\textOff}\big)P_k^{\textOff} & \Phi^\textTS_{\textOff}S_k^{\textOff} \\
		\Phi^\textTS_{\textOn}S_k^{\textOn} & \big(I-\Phi^\textTS_{\textOn}\big)P_k^{\textOn}
	\end{bmatrix}
\end{align}
where $\big(I-\Phi^\textTS_{\textOff}\big)P_k^{\textOff}$ (respectively, $\big(I-\Phi^\textTS_{\textOn}\big)P_k^{\textOn}$) is the matrix $P_k^{\textOff}$ (respectively, $P_k^{\textOn}$) but with the last (respectively, first) row zeroed out. The exact opposite statement is true for $\Phi^\textTS_{\textOff}P_k^{\textOn}$ and $\Phi^\textTS_{\textOn}P_k^{\textOff}$. Hence, by definition of the matrices in $G_k$ we have $P_k = \Phi^\textTS G_k$ where each non-zero element holds the interpretation~\eqref{eq:condIndFact}.
\fi

\subsection{Proof of Lemma 4} \label{app:polConsProof}
We define the following transformation for $l\in\{0,\dots,\tau\}$ and $j\in\{1,\dots,N\}$ as
\begin{align} \label{eq:indTran}
T(j,l) = lN + j
\end{align} 
that maps the integers $j$ and $l$ that label the state values to the absolute index of either of the vectors $\nu^\text{E}_\textOff$ and $\nu^\text{E}_\textOn$.
Now consider the following two sets
\begin{align} \label{eq:WmathcalOff}
\mathcal{W}_\textOff &\triangleq \Big\{(u,j,l) \in \stateSet \ \Big\vert \ \phi^\text{E}_\textOff(u \ \vert \ j, \ l) = \beta_\textOff(u,j,l) \Big\} \\ \label{eq:WmathcalOn}
\mathcal{W}_\textOn &\triangleq \Big\{(u,j,l) \in \stateSet \ \Big\vert \ \phi^\text{E}_\textOn(u \ \vert \ j, \ l) = \beta_\textOn(u,j,l) \Big\}.
\end{align}  
The values $\beta_\textOff$ and $\beta_\textOn$ are chosen to ensure the structural requirements in~\eqref{eq:randPolOff2On}-\eqref{eq:randPolOn2Off} and~\eqref{eq:expPolStruct}. For example, for $l = 1$ and $u =\text{on}$ we have that $\beta_\textOff(\textOn,\cdot,1) = \phi^{\text{TS}}_\textOff(\textOn \vert \cdot)$ (and hence $\beta_\textOff(\textOff,\cdot,1) = 1-\phi^{\text{TS}}_\textOff(\textOn \vert \cdot)$) so to enforce the structural requirement in~\eqref{eq:expPolStruct}.
 Define for each $k \in \timeHorz$ , $u,v \in \{\textOn,\textOff\}$, $j \in \{1,\dots,N \}$, and $l \in \{ 0,\dots,\tau \}$ the following vectors
\begin{align} \nonumber
h^u_k[T(j,l)] &\triangleq \begin{cases}
(\nu_u^\text{E}[\lambda^j,l,k])^{-1} & \text{if} \ \nu_u^\text{E}[\lambda^j,l,k] > 0. \\
0 & \text{otherwise}.
\end{cases} \\ \nonumber
w^{u,v}_k[T(j,l)] &\triangleq \begin{cases}
\beta_v(u,j,l)  &\text{if} \ (u,j,l) \in \mathcal{W}_v \ \text{and} \  \\ &\nu_v^\text{E}[\lambda^j,l,k] = 0. \\
0.5  &\text{if} \ (u,j,l) \notin \mathcal{W}_v \ \text{and} \  \\ &\nu_v^\text{E}[\lambda^j,l,k] = 0. \\
0 & \text{otherwise}.
\end{cases}
\end{align}
where $T(\cdot,\cdot)$ is defined in~\eqref{eq:indTran}, $\mathcal{W}_\textOff$ in~\eqref{eq:WmathcalOff}, and $\mathcal{W}_\textOn$ in~\eqref{eq:WmathcalOn}. Let $W^{u,v}_k = \text{diag}(w^{u,v}_k)$ and $H^{u,v}_k = \text{diag}(h^{u,v}_k)$, and construct the following matrices
\begin{align}
H_k &= \begin{bmatrix}
H_k^\textOff & \mathbf{0} \\
\mathbf{0} & H_k^\textOn
\end{bmatrix}, \quad \text{and} \\
W_k &= \begin{bmatrix}
W^{\textOff,\textOff}_k & W^{\textOff,\textOn}_k & \mathbf{0} & \mathbf{0} \\
\mathbf{0} & \mathbf{0} & W^{\textOn,\textOff}_k & W^{\textOn,\textOn}_k
\end{bmatrix}.
\end{align}

We first show that $\Phi_k^\text{E} = H_kJ_k + W_k$ satisfies~\eqref{eq:changVar}. Note that $\text{diag}(\nu_k^\text{E})W_k = \mathbf{0}$ since by construction if the $i^{th}$ row of $W_k$ has a non zero entry then the $i^{th}$ diagonal entry of $\text{diag}(\nu_k^\text{E})$ is zero. The product $\text{diag}(\nu_k^\text{E})H_k$ is a diagonal matrix with with entries of either zero or one. The zero entries also correspond to the zero entries of $\nu_k^\text{E}$. In this case, the respective entry in $J_k$ is also zero so that $\text{diag}(\nu_k^\text{E})H_kJ_k = J_k$ as desired.

We now show that $\phi_k^\text{E} = H_kJ_k + W_k \in \varPhi$. First consider an arbitrary state indexed by $(\textOff, j, l)$ at time $k$, if the corresponding value in $\nu_{\textOff}[\lambda^j,l,k] > 0$ then the two policy values are defined as
\begin{align}
		&\frac{\Prob\left(m_{k+1} = \text{on},\ I_{k} = j,\ \Locked_k = l,  \ m_k = \text{off}\right)}{\nu_{\textOff}[\lambda^j,l,k]} \\
		&\frac{\Prob\left(m_{k+1} = \text{off},\ I_{k} = j,\ \Locked_k = l,  \ m_k = \text{off}\right)}{\nu_{\textOff}[\lambda^j,l,k]}.
\end{align}
If either of the above values are fixed in the constraint set $\Phi$, then the constraint~\eqref{eq:polCons} will ensure this. Further, since we have that $\nu_k^\text{E},J_k \in [0,1]$ and that $(\nu_k^\text{E})^T = J_k\mathbb{1}$ this ensures that the above policy values are within $[0,1]$ and sum to 1. The above argument is valid for any pair of state values such that the corresponding value of $\nu_k^\text{E}$ is non-zero. If the corresponding value of $\nu_{\textOff}[\lambda^j,l,k] = 0$ and the policy (conditioned on this state value) has a constraint, the first if case in the definition of $w^{u,v}_k$ ensures this constraint. Further, the constraint values must also be chosen to ensure the respective policy values are in $[0,1]$ and sum to one. Lastly, if $\nu_{\textOff}[\lambda^j,l,k] = 0$ and there is no constraint for the policy conditioned on this state value the second if case in the definition of $w^{u,v}_k$ ensures the policy value sums to 1 and the respective elements are in $[0,1]$. Thus $\Phi_k^\text{E} \in \varPhi$ for all $k\in\timeHorz$.

\ifshowArxiv
\subsection{Proof of Theorem 1} \label{app:equivOptProof}
The proof structure is similar to the one in~\cite{BenenatiTractableCDC:2019}. The idea is to exploit the fact that: (i) $\nu_k^\text{E}$ is a decision variable for both optimization problems~\eqref{eq:ConvOpt} and~\eqref{eq:nonConvOpt} and (ii) the objective function is the same for both problems and solely a function of the marginal $\nu_k^\text{E}$. We rewrite these problem compactly below,
\begin{align}
	\eta^*_{\text{CVX}} &= \min_{(\nu^\text{E},J) \in X} \eta(\nu^\text{E}), \\
		\eta^*_{\text{NCVX}} &= \min_{(\nu^\text{E},\Phi^\text{E}) \in Y} \eta(\nu^\text{E}),
\end{align}
where the sets $X$ and $Y$ collect all of the relevant constraints for the problems. The variables $\nu^\text{E}$, $\Phi^\text{E}$, and $J$ are concatenated over the considered finite time horizon and hence are not sub-scripted by $k$. We proceed by showing that $\eta^*_{\text{CVX}} \leq \eta^*_{\text{NCVX}}$ and $\eta^*_{\text{NCVX}} \leq \eta^*_{\text{CVX}}$ to give the desired result.

\subsubsection{$\eta^*_{\text{CVX}} \leq \eta^*_{\text{NCVX}}$}
Pick any argument minimizer that achieves value $\eta^*_{\text{NCVX}}$ and denote the pair as $(\nu^\text{E}_{\text{NCVX}},\Phi^\text{E}_{\text{NCVX}})$. Trivially construct $J$ through the relation~\eqref{eq:changVar} so that this constructed $J$ and $\nu^\text{E}_{\text{NCVX}}$ (that is optimal for~\eqref{eq:nonConvOpt}) are also feasible for~\eqref{eq:ConvOpt}, i.e., $(\nu^\text{E}_{\text{NCVX}},J) \in X$. This is since $\mathbb{1}^T\text{diag}(\nu^{\text{E}}_k) = \nu^{\text{E}}_k$  and $\mathbb{1} = \Phi^{\text{E}}_k\mathbb{1}$. Hence we have that
\begin{align}
	\eta^*_{\text{CVX}} &= \min_{(\nu^\text{E},J) \in X} \eta(\nu^\text{E}) \leq \eta(\nu^\text{E}_{\text{NCVX}}) = \eta^*_{\text{NCVX}}
\end{align}   
since by definition $\eta^*_{\text{CVX}}$ is the minimum value over the set of feasible solutions.

\subsubsection{$\eta^*_{\text{NCVX}} \leq \eta^*_{\text{CVX}}$}
We take a pair $(\nu^\text{E}_{\text{CVX}},J_{\text{CVX}})$ that achieve optimal cost $\eta^*_{\text{CVX}}$ and construct a feasible solution for~\eqref{eq:nonConvOpt}, denoted $(\eta^\text{E}_{\text{NCVX}},\Phi^\text{E}_{\text{NCVX}})$, as follows (for each $k$)
\begin{align}
	\Phi^\text{E}_{k,\text{NCVX}} &= H_kJ_k + W_k, \quad \text{and} \\
	\nu^\text{E}_{k,\text{NCVX}} &= \nu^\text{E}_{k,\text{CVX}}.
\end{align}
Where $H_k$ and $W_k$ are defined in Lemma~\ref{lem:polAlg}. This constructed solution is then feasible for~\eqref{eq:nonConvOpt} as the constraint $\Phi^\text{E}_{\text{NCVX}} \in \varPhi$ is part of the result in Lemma~\ref{lem:polAlg} and
\begin{align}
	\nu^\text{E}_{k,\text{NCVX}}\Phi^\text{E}_{k,\text{NCVX}}G_k^\text{E} &= \nu^\text{E}_{k,\text{NCVX}}\big(H_kJ_k+W_k\big)G_k^\text{E} \\ &= \mathbb{1}^TJ_kG_k^\text{E} = \nu^\text{E}_{k+1,\text{NCVX}}.
\end{align} 
The fact that $\nu^\text{E}_{k,\text{NCVX}}\big(H_kJ_k+W_k\big) = \mathbb{1}^TJ_k$ is since $\nu^\text{E}_{k,\text{NCVX}}W_k = \mathbf{0}$ and $\nu^\text{E}_{k,\text{NCVX}}H_kJ_k = \mathbb{1}^TJ_k$. The matrix $W_k$ only has non zero entries for row indices where the index of the row vector $\nu^\text{E}_{k,\text{NCVX}}$ is zero so that the resulting product is the zero vector. The product $\nu^\text{E}_{k,\text{NCVX}}H_k$ is a vector of ones and zeros, specifically, if the $i^{th}$ element of this vector is zero then the entire $i^{th}$ column of the matrix $J_k$ will be the zero vector. Thus the equivalence between $\nu^\text{E}_{k,\text{NCVX}}\big(H_kJ_k+W_k\big)$ and $ \mathbb{1}^TJ_k$.  Since the constructed solution is feasible we have that
\begin{align}
\eta^*_{\text{NCVX}} &= \min_{(\nu^\text{E},\Phi^\text{E}) \in Y} \eta(\nu^\text{E}) \leq \eta(\nu^\text{E}_\text{CVX}) = \eta^*_\text{CVX}
\end{align}
since by definition $\eta^*_{\text{NCVX}}$ is the minimum value over the set of feasible solutions.
\fi

\ifshowArxiv
\section{PDE discretization} \label{app:pdeDisc}
We denote the $i^{th}$ CV as CV($i$) and further adopt the following notational simplifications,
\begin{align} \nonumber
\mu_{\textOff}(\lambda^i,t) \triangleq \mu_{\textOff}(\lambda^i_{\textOff},t), \quad \text{and} \quad \mu_{\textOn}(\lambda^i,t) \triangleq \mu_{\textOn}(\lambda^i_{\textOn},t).
\end{align}
Highlighted red in Figure~\ref{fig:cvLayout} are the two control volumes to assist in enforcing boundary conditions that coincide with the thermostat policy~\eqref{eq:thermoContLaw}. This is discussed further in Appendix~\ref{app:boundCondProof} when the boundary conditions CVs are discretized.

\subsection{Internal CV's} \label{app:pdeDiscInt}
Consider the RHS of the pde~\eqref{eq:pdeOnMode} integrated over CV($i$):
\begin{align} \nonumber
&\int_{\text{CV(i)}}\bigg(\frac{\sigma^2}{2}\frac{\partial^2}{\partial \lambda^2}\big(\mu_{\textOn}(\lambda,t)\big)-\frac{\partial}{\partial \lambda}\big(f_{\textOn}(\lambda,t)\mu_{\textOn}(\lambda,t)\big)\bigg)d\lambda \\
\label{eq:arbOnIntCV}
&=\bigg(\frac{\sigma^2}{2}\frac{\partial}{\partial \lambda}\mu_{\textOn}(\lambda,t) -f_{\textOn}(\lambda,t)\mu_{\textOn}(\lambda,t) \bigg)\bigg\vert_{\lambda^{i,-}}^{\lambda^{i,+}},
\end{align}
where equality is by the divergence theorem~\cite{VersteegIntroductionBook:2007}.
Note, the points $\lambda^{i,-}$ and $\lambda^{i,+}$ are not control volume variables, but rather the boundaries of a single control volume. Hence, quantities in~\eqref{eq:arbOnIntCV} need to be approximated in terms of the nodal points of the neighboring control volumes. The approximations for the partial derivative are,
\begin{align} \label{eq:centDiff}
\frac{\partial}{\partial \lambda}\mu_{\textOn}(\lambda^{i,+},t) &\approx \frac{\mu_{\textOn}(\lambda^{i+1},t) - \mu_{\textOn}(\lambda^i,t)}{\Delta \lambda}, \quad \text{and} \\
\frac{\partial}{\partial \lambda}\mu_{\textOn}(\lambda^{i,-},t) &\approx \frac{\mu_{\textOn}(\lambda^{i},t) - \mu_{\textOn}(\lambda^{i-1},t)}{\Delta \lambda}.
\end{align}
For the integrated convective term, we use the so-called upwind scheme~\cite{VersteegIntroductionBook:2007}. This scheme elects the FVM equivalent of a forward or backward difference based on the sign of the convective velocity $f_{\text{on}}(\lambda,t)$. By assumption \textbf{A.2}, $f_{\textOn}(\lambda,t) \leq0$ and the upwind scheme prescribes:  
\begin{align} \nonumber
f_{\textOn}(\lambda^{i,-},t)\mu_{\textOn}(\lambda^{i,-},t) &= f_{\textOn}(\lambda^{i,-},t)\mu_{\textOn}(\lambda^{i},t), \quad \text{and}\\
f_{\textOn}(\lambda^{i,+},t)\mu_{\textOn}(\lambda^{i,+},t) &= f_{\textOn}(\lambda^{i,+},t)\mu_{\textOn}(\lambda^{i+1},t).
\end{align}
When the TCL is off $f_{\textOff}(\lambda,t)\geq 0$ (also by Assumption \textbf{A.2}) the upwind scheme prescribes:  
\begin{align} \nonumber
f_{\textOff}(\lambda^{i,-},t)\mu_{\textOff}(\lambda^{i,-},t) &= f_{\textOff}(\lambda^{i,-},t)\mu_{\textOff}(\lambda^{i-1},t), \ \text{and} \\
f_{\textOff}(\lambda^{i,+},t)\mu_{\textOff}(\lambda^{i,+},t) &= f_{\textOff}(\lambda^{i,+},t)\mu_{\textOff}(\lambda^{i},t).
\end{align}
Now returning to the discretization of the PDE~\eqref{eq:pdeOnMode} over an arbitrary internal CV. We approximate the LHS of~\eqref{eq:pdeOnMode} integrated over the control volume as,
\begin{align} \nonumber
\int_{\text{CV(i)}}\frac{\partial}{\partial t}\mu_{\textOn}(\lambda,t)d\lambda \approx \frac{d}{dt}\mu_{\textOn}(\lambda^i,t)\Delta\lambda = \frac{d}{dt}\nu_{\textOn}(\lambda^i,t),
\end{align}
where we have defined
\begin{align}
\nu_{\textOn}(\lambda^i,t)\triangleq \mu_{\textOn}(\lambda^i,t)\Delta\lambda.	
\end{align}
Now, denote the following
\begin{align}
D \triangleq \frac{\sigma^2}{(\Delta\lambda)^2}, \quad \text{and} \quad F_{\textOn}^i(t) \triangleq \frac{f_{\textOn}(\lambda^{i},t)}{\Delta\lambda},
\end{align}
where the quantities $F_{\textOff}^i(t)$, $F_{\textOn}^{i,+}(t)/F_{\textOff}^{i,+}(t)$, and $F_{\textOn}^{i,-}(t)/F_{\textOff}^{i,-}(t)$ are defined similarly to $F_{\textOn}^i(t)$, e.g., $F_{\textOff}^{i,+}(t) \triangleq f_\textOff(\lambda^{i,+},t)/\Delta\lambda$. Now equating the approximation of the RHS~\eqref{eq:pdeOnMode} with the approximation of the LHS of~\eqref{eq:pdeOnMode} we have,
\begin{align} \nonumber
\frac{d}{d t}\nu_{\textOn}(\lambda^i,t) &= \Big(F_{\textOn}^{i,-}(t) - D\Big)\nu_{\textOn}(\lambda^{i},t)+ \frac{D}{2}\nu_{\textOn}(\lambda^{i-1},t)  \\ \label{eq:stanOnCV}
&+\Big(\frac{D}{2}-F_{\textOn}^{i,+}(t)\Big)\nu_{\textOn}(\lambda^{i+1},t). 
\end{align}
The spatial discretization for the PDE~\eqref{eq:pdeOffMode} is similar and yields,
\begin{align} \nonumber
\frac{d}{d t}\nu_{\textOff}(\lambda^i,t) &=  \frac{D}{2}\nu_{\textOff}(\lambda^{i+1},t)  -\Big(F_{\textOff}^{i,+}(t) + D\Big)\nu_{\textOff}(\lambda^{i},t)\\ \label{eq:stanOffCV}
&+\Big(\frac{D}{2}+F_{\textOff}^{i,-}(t)\Big)\nu_{\textOff}(\lambda^{i-1},t),
\end{align} 
where $\nu_{\textOff}(\lambda^i,t)\triangleq \mu_{\textOff}(\lambda^i,t)\Delta\lambda.$

\subsection{Boundary CV's} \label{app:boundCondProof}
The boundary CVs are the CVs associated with the nodal values: $\lambda_{\textOn}^1$, $\lambda_{\textOn}^q$, $\lambda_{\textOn}^N$, $\lambda_{\textOff}^1$, $\lambda_{\textOff}^m$, and $\lambda_{\textOff}^N$. The superscript, for example the integer $q$ in $\lambda_{\textOn}^q$ represents the CV index. All boundary CVs can be seen in Figure~\ref{fig:cvLayout}. Discretization of the boundary CVs requires care for atleast two reasons. First, this is typically where one introduces the BCs of the PDE into the numerical approximation. Secondly, on finite domains the endpoints present challenges, for example, there is no variable $\mu_{\textOn}(\lambda^{N+1},t)$ for computation of the derivative values for node $\lambda^{N}_\textOn$.

The BC's for the coupled PDEs~\eqref{eq:pdeOnMode}-\eqref{eq:pdeOffMode} are~\cite{MalhameElectricTAC:1985}:
\begin{align} \nonumber
&\text{\bf{Absorbing Boundaries}:} \\ \label{eq:absorbBoundary}
&\qquad \qquad \qquad \mu_{\textOn}(\lambda^{\text{min}},t) = \mu_{\textOff}(\lambda^{\text{max}},t) = 0. \\ \nonumber
&\text{\bf{Conditions at Infinity}:} \\ \label{eq:contInfBC}
&\qquad \qquad \qquad \mu_{\textOn}(+\infty,t) = \mu_{\textOff}(-\infty,t) = 0. \\ \nonumber
&\text{\bf{Conservation of Probability}:} \\ \label{eq:consProbOne}
&\frac{\partial}{\partial\lambda}\bigg[\mu_{\textOn}(\lambda^{q,-},t)-\mu_{\textOn}(\lambda^{q-1,+},t)  -\mu_{\textOff}(\lambda^{N-1,+},t)\bigg]= 0. \\ \label{eq:consProbTwo}
&\frac{\partial}{\partial\lambda}\bigg[\mu_{\textOff}(\lambda^{m,+},t)-\mu_{\textOn}(\lambda^{2,-},t) -\mu_{\textOff}(\lambda^{m+1,-},t)\bigg]= 0. \\
&\nonumber \text{\bf{Continuity}:} \\ \label{eq:contBC_1}
&\qquad \qquad \qquad \mu_{\textOn}(\lambda^{q,-},t) = \mu_{\textOn}(\lambda^{q-1,+},t). \\  \label{eq:contBC_2}
&\qquad \qquad \qquad \mu_{\textOff}(\lambda^{m,+},t) = \mu_{\textOff}(\lambda^{m+1,-},t). 
\end{align}
As we will see, implementation of some of the above conditions will require a bit of care. However, some are quite trivial to enforce. For example, by default, the continuity conditions~\eqref{eq:contBC_1} and~\eqref{eq:contBC_2} are satisfied due to our choice of CV structure, since, for example, for any $i$ we have $\lambda_{\textOff}^{i,-}=\lambda_{\textOff}^{i-1,+}$ and $\lambda_{\textOff}^{i,+}=\lambda_{\textOff}^{i+1,-}$. 
\ifx 0
Further, the choice of grid also reduce the conservation of probability to the:
\begin{align} \label{eq:modConsProb}
\frac{\partial}{\partial \lambda}\mu_{\textOff}(\lambda^{N-1,+},t) = \frac{\partial}{\partial \lambda}\mu_{\textOff}(\lambda^{2,-},t) = 0,
\end{align}
which with the BC~\eqref{eq:absorbBoundary} specifies a periodic boundary condition. 
We can reduce~\eqref{eq:consProbOne} and~\eqref{eq:consProbOne} to~\eqref{eq:modConsProb} since the FVM, by design, conserves.Although, we return to this condition after dealing with the simpler conditions first.   
\fi

Now focusing on the conditions at infinity BC~\eqref{eq:contInfBC}, we enforce instead the following conditions:
\begin{align}
\frac{\partial}{\partial \lambda}\mu_{\textOff}(\lambda^{1,-},t) = 0, \quad \text{and} \quad
\frac{\partial}{\partial \lambda}\mu_{\textOn}(\lambda^{N,+},t) = 0.
\end{align}
Our computational domain cannot extend to infinity, where the BC~\eqref{eq:contInfBC} is required to hold, but the temperature values $\lambda_{\textOff}^1$ and $\lambda_{\textOn}^N$ are quite far away from the deadband and so the density here will be near zero.  

Now, consider the spatial discretization of the CVs associated with the BC at infinity. First considering the CV associated with the temperature $\lambda^1_{\textOff}$, we have that the differential equation is
\begin{align} \label{eq:BC_lambdaOff1}
\frac{d}{d t}\nu_{\textOff}(\lambda^1,t) &= \Big(-F_{\textOff}^{1,+}(t) - \frac{D}{2}\Big)\nu_{\textOff}(\lambda^{1},t)  \\  \nonumber
&+\Big(\frac{D}{2}+F_{\textOff}^{2,-}(t)\Big)\nu_{\textOff}(\lambda^{2},t).
\end{align}
Considering the CV associated with the temperature $\lambda^N_{\text{on}}$, we have
\begin{align} \label{eq:BC_lambdaOnN}
\frac{d}{d t}\nu_{\textOn}(\lambda^N,t) &= \Big(F_{\textOn}^{N,+}(t) - \frac{D}{2}\Big)\nu_{\textOn}(\lambda^{N},t)   \\ \nonumber
&+\Big(\frac{D}{2}-F_{\textOn}^{N,+}(t)\Big)\nu_{\textOn}(\lambda^{N-1},t).
\end{align}
In the above we make the assumption that $\nu_{\textOff}(\lambda^{1,-} - \Delta\lambda,t) = 0$ and $\nu_{\textOn}(\lambda^{N,+} + \Delta\lambda,t) = 0$. 

Now focus on the absorbing boundary~\eqref{eq:absorbBoundary} and conservation of probability \eqref{eq:consProbOne}-\eqref{eq:consProbTwo} boundary conditions. These BCs have the following meaning. The condition~\eqref{eq:absorbBoundary} clamps the density at the end of the deadband to zero. BC~\eqref{eq:consProbOne} reads: the net-flux across the temperature value $\lambda^q_{\textOn}$ is equal to the flux of density going from off to on. In order to enforce both~\eqref{eq:consProbOne} and \eqref{eq:consProbTwo} we will model the flux of density due to the thermostat control policy as a source/sink. Before doing this, we mention some issues with enforcing the BC~\eqref{eq:absorbBoundary}.

A TCL's temperature trajectory will not satisfy the BC~\eqref{eq:absorbBoundary} since to switch its mode the TCL's temperature sensor will have to register a value outside the deadband. Therfore, we introduce two additional CV's associated with the temperatures $\lambda^1_\textOn$ and $\lambda^N_\textOff$, which are shown in red in Figure~\ref{fig:cvLayout}. We then transfer the BC~\eqref{eq:absorbBoundary} to one on the added CVs, which becomes:
\begin{align} \label{eq:modfAbsCond}
\mu_{\textOn}(\lambda^{1,-},t) = \mu_{\textOff}(\lambda^{N,+},t) &= 0.
\end{align}
As mentioned, to enforce the conservation of probability BC we use a source/sink type argument, which we also enforce on the added CVs. To see what we mean by source/sink argument, consider the following: some rate of density is transferred out of the CV $\lambda^N_\textOff$ and into the CV $\lambda^q_\textOn$ (as depicted in Figure~\ref{fig:cvLayout}) due to thermostatic control. We model the sink as simply $-\nu_{\textOff}(\lambda^{N},t)$. The rate of the sink is then given as $-\gamma\nu_{\textOff}(\lambda^{N},t)$, where $\gamma>0$ is a modeling choice and a constant of appropriate units that describes the discharge rate. We shortly give insight on how to select a value for $\gamma$. Now discretizing the CV corresponding to the nodal value $\lambda^N_{\textOff}$ subject to the BC~\eqref{eq:modfAbsCond} and the sink $-\nu_{\textOff}(\lambda^{N},t)$ we obtain,
\begin{align} \label{eq:BC_lambdaOffN}
\frac{d}{dt}\nu_{\textOff}(\lambda^{N},t) &= \Big(\frac{D}{2}+F_{\textOff}^{N,-}(t)\Big)\nu_{\textOff}(\lambda^{N-1},t) \\ \nonumber &-\alpha\nu_{\textOff}(\lambda^{N},t), 
\end{align}
where $\alpha \triangleq \big(\gamma+ D\big)$. In obtaining the above, we have made the reasonable assumption that $\nu_{\textOff}(\lambda^{N,+}+\Delta\lambda,t) = 0$.  The quantity $\alpha\nu_{\textOff}(\lambda^{N},t)$ represents the rate of change of density from the CV $\lambda^N_{\textOff}$ to the CV $\lambda^q_{\textOn}$, as depicted in Figure~\ref{fig:cvLayout}. Consequently, to conserve probability, we must add this quantity as a source to the ode for the CV $\lambda^q_{\textOn}$, i.e., 
\begin{align}\label{eq:boundOff2On}
\frac{d}{dt}\nu_{\textOn}(\lambda^{q},t) &= \dots + \alpha\nu_{\textOff}(\lambda^{N},t). 
\end{align}
The dots in equation~\eqref{eq:boundOff2On} represent the portion of the dynamics for the standard internal CV (i.e., the RHS of~\eqref{eq:stanOnCV}) for the temperature node $\lambda^q_{\textOn}$. A similar argument is used for the BC~\eqref{eq:consProbTwo} with the CV's $\lambda^{1}_{\textOn}$ and $\lambda^{m}_{\textOff}$, and the corresponding differential equations are,
\begin{align} \label{eq:BC_lambdaOn1}
\frac{d}{dt}\nu_{\textOn}(\lambda^{1},t) &= \Big(\frac{D}{2}-F_{\textOn}^{1,+}(t)\Big)\nu_{\textOn}(\lambda^{2},t) - \alpha\nu_{\textOn}(\lambda^{1},t), \\ \label{eq:boundOn2Off}
\frac{d}{dt}\nu_{\textOff}(\lambda^{m},t) &= \dots + \alpha\nu_{\textOn}(\lambda^{1},t). 
\end{align}
To better understand the role of $\gamma$ consider the following example. Electing $\gamma$ in the above so that $\alpha = (\Delta t)^{-1}$, where $\Delta t$ is a time increment, has the following interpretation: all mass starting in state $\nu_{\textOff}(\lambda^{N},\cdot)$ at time $t$ is transferred out by time $t+\Delta t$ into the state $\nu_{\textOn}(\lambda^{q},\cdot)$. 

\subsubsection{Additional conditions}

Two additional conditions are enforced, namely that once mass is transferred to the nodes $\lambda^N_\textOff$ or $\lambda^1_\textOn$ it cannot ``travel backwards.'' For example, mass is transferred from $\lambda^N_\textOff$ entirely to the corresponding on temperature bin and no mass is transferred backwards to $\lambda^{N-1}_\textOff$. This corresponds to setting: (i) the coefficient on $\nu_{\textOff}(\lambda^N,t)$ in the ode for $\nu_{\textOff}(\lambda^{N-1},t)$ to zero and (ii) the coefficient on $\nu_{\textOn}(\lambda^1,t)$ in the ode for $\nu_{\textOn}(\lambda^{2},t)$ to zero.

\subsection{Overall system}

Now, combining the odes--\eqref{eq:stanOnCV} and~\eqref{eq:stanOffCV} for all of the internal CVs and \eqref{eq:BC_lambdaOff1}, \eqref{eq:BC_lambdaOnN}, \eqref{eq:BC_lambdaOffN}, \eqref{eq:boundOff2On}, \eqref{eq:BC_lambdaOn1}, \eqref{eq:boundOn2Off} for the BC CVs--we obtain the linear time varying system,
\begin{align} \label{eq:dynContMC-appendix}
\frac{d}{dt}\nu(t) = \nu(t)A(t).
\end{align}
\fi

\end{document}